\documentclass[11pt]{article}

\usepackage[margin=1in]{geometry}
\usepackage{amsmath,amssymb,amsthm,mathtools}
\usepackage{aliascnt}
\usepackage{enumitem}
\usepackage{microtype}
\usepackage{xspace}
\usepackage{xcolor}
\usepackage[backend=biber,style=alphabetic,maxbibnames=99,minalphanames=3]{biblatex}
\usepackage{hyperref}
\usepackage[nameinlink,capitalise,noabbrev]{cleveref}

\AtBeginBibliography{\small}

\definecolor{PineGreen}{RGB}{34, 92, 70}
\definecolor{OliveCite}{RGB}{85, 107, 47}
\definecolor{WarmAmber}{RGB}{180, 95, 20}

\hypersetup{
    colorlinks=true,
    linkcolor=PineGreen,
    citecolor=OliveCite,
    urlcolor=WarmAmber,
}

\newtheorem{theorem}{Theorem}[section]
\newaliascnt{lemma}{theorem}
\newtheorem{lemma}[lemma]{Lemma}
\aliascntresetthe{lemma}
\newaliascnt{corollary}{theorem}
\newtheorem{corollary}[corollary]{Corollary}
\aliascntresetthe{corollary}
\newaliascnt{proposition}{theorem}

\aliascntresetthe{proposition}
\theoremstyle{definition}
\newaliascnt{definition}{theorem}
\newtheorem{definition}[definition]{Definition}
\aliascntresetthe{definition}
\newaliascnt{construction}{theorem}
\newtheorem{construction}[construction]{Construction}
\aliascntresetthe{construction}
\theoremstyle{remark}
\newaliascnt{remark}{theorem}

\aliascntresetthe{remark}
\crefname{theorem}{theorem}{theorems}
\Crefname{theorem}{Theorem}{Theorems}
\crefname{construction}{construction}{constructions}
\Crefname{construction}{Construction}{Constructions}
\crefname{definition}{definition}{definitions}
\Crefname{definition}{Definition}{Definitions}
\crefname{lemma}{lemma}{lemmas}
\Crefname{lemma}{Lemma}{Lemmas}
\crefname{proposition}{proposition}{propositions}
\Crefname{proposition}{Proposition}{Propositions}
\crefname{corollary}{corollary}{corollaries}
\Crefname{corollary}{Corollary}{Corollaries}
\crefname{remark}{remark}{remarks}
\Crefname{remark}{Remark}{Remarks}

\newcommand{\PPAD}{\mathsf{PPAD}}
\newcommand{\spend}{\operatorname{spend}}
\newcommand{\dec}{\operatorname{dec}}
\newcommand{\NOT}{\textnormal{\textsf{NOT}}\xspace}
\newcommand{\NOR}{\textnormal{\textsf{NOR}}\xspace}
\newcommand{\NPURIFY}{\textnormal{\textsf{NPURIFY}}\xspace}
\newcommand{\PURIFY}{\textnormal{\textsf{PURIFY}}\xspace}
\newcommand{\PureCircuit}{\textnormal{\textsc{Pure-Circuit}}\xspace}
\newcommand{\goodsupport}{\Delta_{\mathrm{good}}^{+}}
\newcommand{\biddersupport}{\Delta_{\mathrm{bidder}}^{+}}

\title{Tight Inapproximability of Pacing and Throttling Equilibria\\ in Second-Price Auctions}
\author{Zhengyang Liu\thanks{Beijing Institute of Technology, zhengyang@bit.edu.cn}}
\date{}

\begin{document}
\maketitle
\begin{abstract}
  Budget-constrained advertisers commonly rely on two control mechanisms: pacing scales bids, whereas throttling randomizes participation. We prove that, in second-price auctions, these two different mechanisms share the same sharp approximation-hardness threshold. For pacing, computing a $\gamma$-approximate equilibrium is $\PPAD$-hard for every constant $\gamma\in[0,1)$. For throttling, computing a $\delta$-approximate equilibrium is $\PPAD$-hard for every constant $\delta\in(0,1)$. At parameter $1$, the complementarity requirement becomes vacuous and the all-zero solution is feasible. That is, approximation does not eliminate the fixed-point barrier at any nontrivial parameter value.
\end{abstract}

\section{Introduction}
\label{sec:introduction}

Advertisers who participate in many auctions must respect a single budget across all of them. Two standard control mechanisms enforce this constraint. \emph{Pacing} multiplies all of a bidder's bids by one common factor, whereas \emph{throttling} leaves bids unchanged and randomizes whether the bidder participates. Despite this operational difference, the two mechanisms implement the same complementarity principle: a bidder should be restricted only when her budget is close to binding.

Whether this principle is computationally benign depends crucially on the payment rule. Under first-price payments, pacing equilibria admit a convex-program characterization~\cite{CKP22}, while throttling has a unique equilibrium computable by t\^{a}tonnement-style decentralized dynamics~\cite{CKK21T}. Under second-price payments, by contrast, both equilibrium problems are $\PPAD$-hard~\cite{CKK24,CL25,CKK21T}. This common change in complexity leads to a natural question: even if exact computation is hard, does sufficiently coarse approximation restore tractability?

Previous work offered partial negative answers. Chen and Li~\cite{CL25} proved pacing hardness for $\gamma < 1/3$. For throttling, the reduction of Chen, Kroer, and Kumar~\cite{CKK21T} from approximate threshold games---combined with the tight $\varepsilon < 1/6$ threshold-game hardness of Deligkas et al.~\cite{DFHM24}---yields hardness for $\delta < 1/19$.\footnote{In the relevant regime, the reduction of Chen, Kroer, and Kumar uses $\delta=\varepsilon/(3+\varepsilon)$; hence $\varepsilon<1/6$ corresponds exactly to $\delta<1/19$.} In this paper, we completely settle the positive-error gap: hardness persists up to the vacuous endpoint for both models.

\paragraph{Main Results.}
For pacing, computing a $\gamma$-approximate equilibrium is $\PPAD$-hard for every constant $\gamma\in[0,1)$. For throttling, computing a $\delta$-approximate equilibrium is $\PPAD$-hard for every constant $\delta\in(0,1)$. At parameter $1$, complementarity disappears and the all-zero solution is feasible. Thus, in both models, hardness persists throughout the entire non-vacuous positive-error range; the exact throttling endpoint is discussed separately in \Cref{sec:discussion}.

Although the two reductions reach the same endpoint, they do so with different sparsity guarantees. For pacing, the hard instance $\mathcal I$ satisfies
\[
  \goodsupport(\mathcal I)=2,
  \qquad
  \biddersupport(\mathcal I)=O((1-\gamma)^{-2}).
\]
For throttling, writing $H:=1-\delta$, it satisfies
\[
  \goodsupport(\mathcal I),\ \biddersupport(\mathcal I)
  =O\left(H^{-1}(1+\log(1/H))\right).
\]
Here $\goodsupport$ is the maximum number of positive bidders on a good and $\biddersupport$ is the maximum number of positive goods for a bidder; \Cref{def:support-degrees} gives the formal definition. The two hardness theorems appear in \Cref{thm:pacing-hardness,thm:throttling-hardness}.

\paragraph{Extensions.}
The same picture survives two natural extensions. First, both constructions remain polynomial when the distance to the vacuous endpoint is inverse-polynomial; see \Cref{cor:pacing-inverse-polynomial-gap,cor:throttling-inverse-polynomial-gap}. Second, the hardness survives when the approximation margins are relaxed independently rather than tied to a single parameter. These stronger formulations are stated in \Cref{thm:independent-pacing,thm:independent-throttling}.

\paragraph{Technical Overview.}
We reduce from bounded-degree \PureCircuit. Each circuit node $v$ is represented by one variable bidder, and the bidder's pacing multiplier or throttling probability encodes the state of $v$. The high-level gate logic is simple: a source node sends one scalar signal to each of its successors, and the budget of an output bidder turns the total incoming signal into a threshold test. A sufficiently small signal leaves enough budget slack to force the output into its high state, while a sufficiently large signal makes the high state infeasible and forces the output into its low state. Thus a \NOT gate is implemented by choosing a budget between the low and high signal levels, and the same threshold test implements \NOR after summing two incoming signals. For \NPURIFY, we use two different output budgets. If both outputs were indeterminate, the common predecessor signal would have to lie in two disjoint ambiguity intervals, which is impossible. This common logic is formalized in \Cref{sec:reduction-framework}; the main difficulty in both reductions is therefore not the circuit simulation itself, but how to transmit a sufficiently clean signal under second-price payments when the approximation parameter approaches the vacuous endpoint.

\medskip
\noindent\emph{Pacing: replacing a positive multiplier floor by geometric isolation.}
Write $\beta:=1-\gamma$. Chen and Li~\cite{CL25} obtain constant hardness by first forcing every variable multiplier that participates in the reduction to stay above a fixed positive floor. This is a very useful trick: once the target multiplier is bounded away from zero, one can give the target a sufficiently large value on a signal good, guarantee that it always wins, and make its second-price payment equal to the source multiplier. Their reduction then encodes the two Boolean states by two multipliers separated by a constant gap and yields hardness for $\gamma<1/3$.

This approach is not robust as $\gamma\to1$. The complementarity condition can then exploit only a $\beta$-fraction of the output budget, so the scale at which the reduction distinguishes low and high signals must itself go to zero. In particular, we can no longer rely on keeping every variable multiplier above a constant floor. Once the low state is allowed to approach zero, however, the target of a signal edge need not always dominate the source. A naive signal good therefore becomes two-way: the target may fail to win, and the source may instead incur a second-price payment depending on the target. Such backward payment changes the source's own budget balance and destroys the intended circuit semantics. Controlling this feedback while still transmitting the source multiplier accurately is the main obstacle to pushing pacing hardness to the endpoint.

Our replacement for the fixed-floor trick is a \emph{geometric wire}. For an interaction edge $u\to v$, let $a$ and $b$ denote the source and target multipliers. The wire consists of $K$ goods. The source has value $1/K$ on every level, while the target value on level $\ell$ is $1/(Kt_\ell)$, where the thresholds $t_\ell$ grow geometrically and all satisfy $t_\ell\le L/2$. We choose $L=\Theta(\beta^2)$ and $K=\Theta(\beta^{-2})$. The wire has two complementary properties. First, whenever the target is outside the low decoding region, namely $b>L$, it uniquely wins \emph{every} level and pays $a/K$ on each one. Hence its total payment on the wire is exactly $a$,
so the source multiplier is transmitted without distortion. Second, if the source wins any levels, those levels form a geometric suffix of the wire. The second prices along this suffix decrease geometrically, and their total is less than $4/(3K)$. Since every \PureCircuit node has out-degree at most two, all backward payments of a variable bidder are only $O(1/K)=O(\beta^2)$.

This is the key improvement over the previous construction. We do not force the low state to stay at a constant multiplier. Instead, we let the decoding threshold shrink to $L=\Theta(\beta^2)$ and make the unwanted feedback shrink at the same rate. The output budget can then distinguish an incoming signal of size at most $L$ from one bounded away from zero even though complementarity has only $\Theta(\beta)$ relative slack. Concretely, if the aggregate incoming multiplier is sufficiently small, the bidder spends less than $\beta B$ and is forced to multiplier $1$; if the aggregate signal exceeds $B$, any multiplier above $L$ would make the exact forward payments exceed the budget, so the bidder must lie in the low region. The remaining intermediate case is confined to a short budget-dependent interval. Two carefully separated budgets give disjoint such intervals for the two outputs of \NPURIFY. The delicate point is that the same wire must simultaneously give \emph{exact} forward transmission whenever the target is not low and \emph{uniformly tiny} reverse leakage for every possible source and target multiplier, including intermediate states and ties; the geometric spacing is what makes both properties hold at once.

\medskip
\noindent\emph{Throttling: turning a constant blocker response into an exponential signal gap.}
For throttling, write $H:=1-\delta$. The difficulty is different. A variable bidder creates no backward payment on an outgoing signal good, but an incoming expected payment is necessarily multiplied by the target's own participation probability. Thus the most direct signal from a source $u$ to a target $v$ has the form $p\theta_u\theta_v$. When $\delta$ is close to $1$, even a decoded high target is guaranteed to participate only with probability $H$, and this multiplicative attenuation becomes severe. Simply repeating the same signal good does not help: it multiplies both the low and high cases by the same factor and therefore does not improve their ratio. The earlier threshold-game route of Chen, Kroer, and Kumar~\cite{CKK21T} yields hardness only in a fixed constant range; its analysis does not provide a parameter-dependent amplification that survives as $H\to0$. Reaching every $\delta<1$ therefore requires a signal whose low--high ratio itself improves as $H$ shrinks.

We obtain such a signal with a \emph{blocker amplifier}. Each source $u$ is equipped with $K$ blocker bidders $z_{u,1},\ldots,z_{u,K}$. A control good couples each blocker to the source. We choose the blocker budget so that
\[
    \theta_u\le L \quad\Longrightarrow\quad \theta_{z_{u,q}}\ge H,
    \qquad
    \theta_u\ge H \quad\Longrightarrow\quad \theta_{z_{u,q}}\le H/4,
\]
where $L=H^3/8$. Thus every individual blocker reacts only by a constant-factor change. The amplification occurs on the signal good: all blockers bid above the target, and the target bids above the source. Consequently, conditional on the target participating, it sees the source as the second-price bidder only when \emph{all} blockers are absent. The effective source signal is therefore
\[
    y_u
    =p\theta_u\prod_{q=1}^K(1-\theta_{z_{u,q}}).
\]
A low source gives $y_u\le pL(1-H)^K$, whereas a high source gives $y_u\ge pH(1-H/4)^K$.
The crucial point is the product. The constant-factor difference between one blocker's low- and high-source responses is raised to the $K$th power, producing an exponential separation. Taking
\[
    K=O\left(H^{-1}(1+\log(1/H))\right)
\]
is enough for this exponential gain to dominate all polynomial losses coming from the vanishing decoding scale $L$. In the proof this is summarized by the master inequality $4a_0<HL^2a_1$, where $a_0$ and $a_1$ are the worst-case low and high signal levels.

Once this amplified signal is available, the rest again becomes a budget threshold test. A variable bidder pays only on incoming signal goods, and in fact its spending has the exact form
\[
    S_v(\theta)=\theta_v\sum_{u\in N^-(v)}y_u.
\]
If the aggregate signal is below $HB$, complementarity forces $\theta_v\ge H$; if it exceeds $B/L$, any $\theta_v>L$ violates the budget. An indeterminate unary output therefore pins its predecessor signal to the interval $(B,B/L)$. Choosing two output budgets with disjoint intervals implements \NPURIFY. We reuse the \emph{same} blocker bundle on all edges leaving a source, so the two outputs of an \NPURIFY gate observe exactly the same scalar $y_u$, which is essential for this argument.

The two reductions therefore overcome opposite second-price effects. Pacing must suppress an \emph{additive backward-feedback error}, which we do by geometric isolation; throttling must overcome a \emph{multiplicative attenuation}, which we do by exponential blocker amplification. In both cases the approximation parameter approaching $1$ makes the available complementarity signal vanish, so a constant-gap gadget is insufficient. The technical contribution is to make the mechanism-specific transmission error vanish faster than the budget test degenerates: $O(\beta^2)$ leakage for pacing and an exponentially amplified low--high ratio for throttling. After this signal problem is solved, the same budget-threshold framework converts the construction into \NOT, \NOR, and \NPURIFY gates and yields hardness throughout the entire non-vacuous approximation range. The appendices show that the same ideas survive when the different approximation margins are relaxed independently.

\paragraph{Related Work.}
Multiplicative pacing originates in bid-optimization dynamics for repeated ad auctions~\cite{BCI07}. First-price pacing is closely connected to Fisher markets and enjoys strong computational and structural properties~\cite{CKP22}; subsequent work studies computation and statistical inference in these settings~\cite{LK23,LK24}. In contrast, second-price pacing equilibria need not be unique, and optimizing welfare or revenue over them is NP-hard~\cite{CKSM22}. Chen, Kroer, and Kumar~\cite{CKK24} established $\PPAD$-hardness for inverse-polynomial approximation, while Chen and Li~\cite{CL25} proved the first hardness result for a fixed constant, namely $\gamma<1/3$, even when each bidder has positive value for at most four goods. Our endpoint construction trades this constant bidder-side support for support that depends on $1-\gamma$. Exact equilibria are nevertheless computable in polynomial time when either the number of bidders~\cite{YWL26} or the number of goods~\cite{HYWL26} is constant.

Chen, Kroer, and Kumar~\cite{CKK21T} introduced throttling equilibria for first- and second-price auctions and proved $\PPAD$-hardness for second-price throttling via approximate threshold games, even when each good has at most three positive bids. Our endpoint construction instead reduces directly from bounded-degree \PureCircuit and uses parameter-dependent support on signal goods. Deligkas, Fearnley, Hollender, and Melissourgos~\cite{DFHM24} introduced \PureCircuit and established its tight inapproximability properties; Chen and Li~\cite{CL25} gave the bounded-degree $\{\NOT,\NOR,\NPURIFY\}$ formulation used here.

Recent surveys cover the broader autobidding literature~\cite{ABB24,BBF24}. Related work studies auction design, strategic budgets, ROI constraints, and revenue guarantees~\cite{BDMMZ21,FLS24,LPSZ24,CIS26}. Chen, Morgenstern, and Yang~\cite{CMY26} obtain tractability and convergence under diffuse value distributions, complementing our worst-case setting.

\paragraph{Organization.}
\Cref{sec:preliminaries} introduces the two equilibrium notions and the bounded-degree \PureCircuit source problem. We then isolate the decoder and gate logic shared by both reductions in \Cref{sec:reduction-framework}. With this common layer in place, \Cref{sec:pacing-hardness,sec:throttling-hardness} prove the $\PPAD$-hardness results for pacing and throttling, respectively. Finally, \Cref{sec:discussion} summarizes the consequences and remaining questions. The independent-relaxation extensions are deferred to \Cref{app:independent-pacing,app:independent-throttling}.

\section{Preliminaries}
\label{sec:preliminaries}

\paragraph{Notation.}
For every positive integer $n$, write $[n]:=\{1,\ldots,n\}$.

\subsection{Second-Price Pacing Games}

The \emph{second-price pacing game} runs $m$ simultaneous single-good auctions for $n$ budget-constrained bidders. Bidder $i$ has value $v_{ij}\ge0$ for good $j$ and a budget $B_i>0$ across all goods. We assume that every bidder values at least one good and that every good has at least one positive valuation.

Bidder $i$ selects a multiplicative factor $\alpha_i \in [0,1]$ to scale all of her values uniformly, resulting in a bid of $\alpha_i v_{ij}$ on good $j$. Given a multiplier vector $\alpha \in [0,1]^n$, let $h_j(\alpha) := \max_i \alpha_i v_{ij}$ denote the highest scaled bid for good $j$, and let $p_j(\alpha)$ denote the second-highest scaled bid. A winner of good $j$ pays $p_j(\alpha)$ per unit. If the two largest bids tie, the highest and second-highest bids coincide; zero bids are also included when determining the second price. When there is only one bidder, we adopt the equivalent convention of appending a dummy zero bid, so $p_j(\alpha)=0$.

An allocation $x \in [0,1]^{n \times m}$ specifies fractions $x_{ij} \ge 0$ satisfying $\sum_i x_{ij} \le 1$ for each good $j$. Alternatively, $x_{ij}$ can be interpreted as the probability that bidder $i$ receives an indivisible good $j$. Under allocation $x$ and multiplier vector $\alpha$, bidder $i$'s total spending is $\spend_i(\alpha,x) := \sum_j x_{ij} p_j(\alpha)$.

A pacing equilibrium combines three ideas. Allocation must respect the scaled bids: only highest scaled bidders may receive a good, and every good with a positive highest bid must be fully allocated. Spending must remain within budget. Finally, pacing is complementary to budget slack: a bidder whose budget is not binding must use multiplier $1$. The formal definition records these requirements separately.

\begin{definition}[Pacing Equilibrium]
\label{def:pacing-equilibrium}
A pair $(\alpha,x)$ is a \emph{pacing equilibrium} if:
\begin{enumerate}[label=(\alph*)]
\item $x_{ij}>0$ implies $\alpha_i v_{ij}=h_j(\alpha)$;
\item $h_j(\alpha)>0$ implies $\sum_i x_{ij}=1$;
\item $\spend_i(\alpha,x)\le B_i$ for every bidder $i$; and
\item $\spend_i(\alpha,x)<B_i$ implies $\alpha_i=1$.
\end{enumerate}
\end{definition}

Pacing equilibria exist under the standard market assumptions~\cite{CKSM22}. Approximation affects only the complementarity condition: $\gamma$ determines how much budget slack is tolerated before a bidder must be fully unpaced, while allocation and budget feasibility remain exact. Chen and Li~\cite{CL25} define the parameter on $\gamma\in[0,1)$; we also record the natural value $\gamma=1$ so that the exact and vacuous endpoints align with the throttling model.

\begin{definition}[Approximate Pacing Equilibrium]
\label{def:approx-pacing}
Let $\gamma\in[0,1]$. A pair $(\alpha,x)$ is a \emph{$\gamma$-approximate pacing equilibrium} if:
\begin{enumerate}[label=(\alph*)]
\item $x_{ij}>0$ implies $\alpha_i v_{ij}=h_j(\alpha)$;
\item $h_j(\alpha)>0$ implies $\sum_i x_{ij}=1$;
\item $\spend_i(\alpha,x)\le B_i$ for every bidder $i$; and
\item $\spend_i(\alpha,x)<(1-\gamma)B_i$ implies $\alpha_i=1$.
\end{enumerate}
\end{definition}

At $\gamma=0$, this is exactly \Cref{def:pacing-equilibrium}. At $\gamma=1$, condition~(d) is vacuous, so the all-zero multiplier vector together with the zero allocation is immediately feasible.

\subsection{Second-Price Throttling Games}

The \emph{second-price throttling game} runs $m$ simultaneous single-good auctions for $n$ budget-constrained bidders with fixed nonnegative bids $(b_{ij})_{i,j}$ and positive budgets $(B_i)_i$. Throttling leaves bids unchanged and instead randomizes each bidder's admission to every auction. Bidder $i$ chooses a single participation probability $\theta_i\in[0,1]$ and is admitted independently of the other bidders within each auction. Correlation across different auctions does not affect expected spending. Among the admitted bidders, the highest bidder wins and pays the second-highest admitted bid, or zero if no other bidder participates. Each good has a fixed priority order for breaking bid ties.

The expected payment is obtained by conditioning on the runner-up. Let $i\succ_j\ell$ mean that $i$ ranks ahead of $\ell$ on good $j$, after applying the tie-breaking priority. For $i$ to win and pay $b_{\ell j}$, bidders $i$ and $\ell$ must participate and every bidder other than $i$ ranked above $\ell$ must be absent. Thus bidder $i$'s expected spending on good $j$ is
\begin{equation}
\label{eq:throttling-payment}
 q_{ij}(\theta)
 :=\sum_{\ell:\,i\succ_j\ell}
 b_{\ell j}\theta_i\theta_\ell
 \prod_{\substack{k\ne i:\,k\succ_j\ell}}(1-\theta_k).
\end{equation}
The quantity $S_i(\theta):=\sum_j q_{ij}(\theta)$ is bidder $i$'s expected spending across the market. The formula also covers bid ties because $\succ_j$ is a strict total order.

As in pacing, throttling combines budget feasibility with complementarity: expected spending cannot exceed the budget, and a bidder may be throttled only when the budget is exhausted.

\begin{definition}[Throttling Equilibrium]
\label{def:throttling-equilibrium}
A vector $\theta\in[0,1]^n$ is a \emph{throttling equilibrium} if, for every bidder $i$,
\begin{align*}
 S_i(\theta)&\le B_i,\\
 S_i(\theta)&<B_i\implies\theta_i=1.
\end{align*}
\end{definition}

The approximation parameter $\delta$ simultaneously relaxes two parts of complementarity. It lowers the spending threshold that activates the condition to a $(1-\delta)$ fraction of budget, and it lowers the required participation probability to $1-\delta$. Chen, Kroer, and Kumar~\cite{CKK21T} use these coupled conditions for positive error parameters. We write the same formula on $\delta\in[0,1]$, making $0$ the exact case and $1$ the vacuous endpoint, just as for pacing. \Cref{app:independent-throttling} later separates these two margins.

\begin{definition}[Approximate Throttling Equilibrium]
\label{def:approx-throttling}
For $\delta\in[0,1]$, a vector $\theta\in[0,1]^n$ is a \emph{$\delta$-approximate throttling equilibrium} if, for every bidder $i$,
\begin{align}
 S_i(\theta)&\le B_i,
 \label{eq:throttling-budget-feasibility}\\
 S_i(\theta)&<(1-\delta)B_i\implies\theta_i\ge1-\delta.
 \label{eq:throttling-complementarity}
\end{align}
\end{definition}

Thus a $0$-approximate throttling equilibrium is exactly a throttling equilibrium in the sense of \Cref{def:throttling-equilibrium}. At $\delta=1$, condition~\eqref{eq:throttling-complementarity} is vacuous, so the all-zero participation vector is immediately feasible.

To compare the sparsity of the two reductions on the same scale, we use the following support degrees.

\begin{definition}[Positive-Support Degrees]
\label{def:support-degrees}
Let $\mathcal I$ be either a pacing or a throttling instance, and define
\[
 w_{ij}:=
 \begin{cases}
  v_{ij},&\text{if $\mathcal I$ is a pacing instance},\\
  b_{ij},&\text{if $\mathcal I$ is a throttling instance}.
 \end{cases}
\]
The \emph{good-side positive-support degree} and \emph{bidder-side positive-support degree} of $\mathcal I$ are
\begin{align*}
 \goodsupport(\mathcal I)
 &:=\max_{j\in[m]}\left|\{i\in[n]:w_{ij}>0\}\right|,\\
 \biddersupport(\mathcal I)
 &:=\max_{i\in[n]}\left|\{j\in[m]:w_{ij}>0\}\right|.
\end{align*}
\end{definition}
In words, $\goodsupport(\mathcal I)$ is the largest number of bidders with a positive valuation or bid on any one good, while $\biddersupport(\mathcal I)$ is the largest number of goods on which any one bidder has a positive valuation or bid. When the instance is clear, we simply write $\goodsupport$ and $\biddersupport$.

\subsection{The \PureCircuit Problem}
\label{sec:purecircuit-problem}
Both reductions use the same bounded-degree \PureCircuit problem as their source. An instance consists of a node set $V$ and a collection of gates. A solution assigns each node a value in $\{0,1,\bot\}$, where $\bot$ denotes an unconstrained or indeterminate state, subject to the gate semantics below. Within each gate, all named inputs and outputs are distinct, so every interaction edge has distinct endpoints.
\begin{itemize}[leftmargin=2em]
\item $\NOT(u;v)$: Reverses a Boolean input and leaves $\bot$ unconstrained:
  \begin{equation*}
    \chi(u) \in \{0,1\} \implies \chi(v) = 1 - \chi(u).
  \end{equation*}
\item $\NOR(u,w;v)$: Outputs $1$ if both inputs are $0$, outputs $0$ if either input is $1$, and is unconstrained otherwise:
  \begin{align*}
    \chi(u) = 0 \land \chi(w) = 0 &\implies \chi(v) = 1, \\
    \chi(u) = 1 \lor \chi(w) = 1 &\implies \chi(v) = 0.
  \end{align*}
\item $\NPURIFY(u;v,w)$: Requires at least one output to be Boolean and forces both outputs to negate a Boolean input:
  \begin{align*}
    \{\chi(v), \chi(w)\} &\cap \{0,1\} \neq \emptyset, \\
    \chi(u) \in \{0,1\} &\implies \chi(v) = \chi(w) = 1 - \chi(u).
  \end{align*}
\end{itemize}

The \emph{interaction graph} has a directed edge from every gate input to every gate output; let $N^-(v)$ denote the in-neighbors of node $v$. We use the bounded-degree formulation of Chen and Li~\cite{CL25}, which is equivalent to the \PURIFY-based formulation of Deligkas, Fearnley, Hollender, and Melissourgos~\cite{DFHM24}. In this version, every node is the output of exactly one gate and has total degree at most three. Since every node therefore has in-degree at least one, its out-degree is at most two. The source theorem we use is the following.

\begin{theorem}[\PureCircuit Hardness~\cite{DFHM24,CL25}]\label{thm:pure-circuit}
  \PureCircuit restricted to $\{\NOT, \NOR, \NPURIFY\}$ gates is $\PPAD$-complete, even when every node is the output of exactly one gate and, more specifically, every node has $(d_{\mathrm{in}},d_{\mathrm{out}})\in\{(1,1),(2,1),(1,2)\}$.
\end{theorem}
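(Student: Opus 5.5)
This theorem is imported from \cite{DFHM24,CL25}, so the plan is to assemble it from those sources rather than re-derive $\PPAD$-hardness from scratch. It has two parts: membership in $\PPAD$ and hardness of the degree-restricted version.

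\emph{Membership.} Any Pure-Circuit instance can be encoded as a Brouwer-type fixed-point problem with bounded approximation. Following \cite{DFHM24}, we replace each node's value by a real number in $[0,1]$. Values in $[0,1/3)$ decode to $0$, values in $(2/3,1]$ decode to $1$, and the middle interval decodes to $\bot$. Each gate is realized by a continuous piecewise-linear map whose Boolean outputs are robust. An approximate fixed point then decodes to a valid assignment, so the problem lies in $\PPAD$.

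\emph{Hardness.} We start from the $\PPAD$-completeness of Pure-Circuit with $\{\NOT,\NOR,\PURIFY\}$ gates \cite{DFHM24}. We then apply the gadget transformations of \cite{CL25}, carried out in three steps.
\begin{enumerate}[label=(\roman*)]
\item Replace each \PURIFY gate by an \NPURIFY gate followed by \NOT gates on its outputs. A \NOT preserves $\bot$ and negates Booleans, so the guarantee ``at least one Boolean output'' is preserved.
\item Reduce fan-out. A node with many successors is routed through a binary tree of duplicators. Each duplicator is an \NPURIFY whose outputs feed \NOT gates, or a chain of paired \NOT gates, so that every node has out-degree at most two.
\item Ensure every node is the output of exactly one gate. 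Free input nodes get a dummy \NOT self-loop pair, which is valid because a \NOT cycle of even length admits all-$\bot$ or consistent Boolean solutions. Then check that the resulting degree types are exactly $(1,1)$ for \NOT, $(2,1)$ for \NOR, and $(1,2)$ for \NPURIFY-driving nodes.
\end{enumerate}

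The main obstacle is step (ii). Fan-out duplication must not let $\bot$ propagate in a way that breaks the source semantics. Concretely, a copy of a Boolean node must remain that same Boolean value, while a $\bot$ node may produce arbitrary copies. Chains of \NOT gates faithfully copy Boolean values but leave $\bot$ free, and that freedom is harmless because downstream gates are unconstrained on $\bot$ inputs anyway. Verifying this correspondence for solutions, and the polynomial size of the result, completes the reduction. We cite \cite{CL25} for the full details.
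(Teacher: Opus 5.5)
The paper gives no proof of this theorem. It imports the bounded-degree $\{\NOT,\NOR,\NPURIFY\}$ formulation of Chen and Li, which is equivalent to the \PURIFY-based problem of Deligkas et al. Resting on those citations is therefore exactly what the paper does, and your membership sketch and step~(i) are fine. The gap is in your derivation of the degree profile $\{(1,1),(2,1),(1,2)\}$, which is the only content beyond Deligkas et al.: the ``check'' in step~(iii) would fail, for three reasons.

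\begin{itemize}
\item Nothing in (i)--(iii) gives an out-edge to a node that no gate uses as an input. The \PURIFY-based definition allows such sinks, but the statement requires $d_{\mathrm{out}}\ge1$ for every node.
\item Step~(ii) only bounds out-degrees by $2$, and the root of any duplicator itself has out-degree $2$. So a \NOR output that feeds an \NPURIFY (two interaction edges) or two gates ends up of type $(2,2)$. Your labels ``$(1,1)$ for \NOT, $(1,2)$ for \NPURIFY-driving nodes'' hide this. They conflate the gate a node outputs, which fixes $d_{\mathrm{in}}$, with the gates it feeds, which fix $d_{\mathrm{out}}$.
\item The free inputs of step~(iii) do not arise, since in the \PureCircuit definition every node is already the output of exactly one gate. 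If they did arise, the \NOT $2$-cycles you add after step~(ii) would push some out-degrees to $3$.
\end{itemize}

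The repairs are local and should be done before reducing fan-out. Attach $\NOR(x,d_2;d_1)$ and $\NOT(d_1;d_2)$ to every sink $x$. The new nodes have types $(2,1)$ and $(1,1)$, and the old gates are untouched, so solutions restrict to the original instance. Then replace each $\NOR(u,w;x)$ with $d_{\mathrm{out}}(x)\ge2$ by $\NOR(u,w;x_0)$, $\NOT(x_0;x_1)$, $\NOT(x_1;x)$, so that every fanned-out node has in-degree $1$. This is sound because if $x_0=\bot$ the original \NOR was unconstrained, and otherwise $x=x_0$.

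Two justifications also need correcting.
\begin{itemize}
\item A \NOT does not preserve $\bot$: its output is unconstrained on input $\bot$. Step~(i) only needs that \NOT maps Booleans to Booleans.
\item Downstream gates are not unconstrained on $\bot$ inputs: $\NOR(\bot,1;v)$ forces $v=0$, and \NPURIFY always demands a Boolean output.
\end{itemize}
What actually makes your duplicators harmless is monotonicity. For each of the three gates, the outputs allowed when an input is $\bot$ include the outputs allowed for either Boolean value of that input. Hence copies of a $\bot$ node that take arbitrary values never cause a violation when you map a solution back to the original instance.
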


\section{A Common Reduction Framework}
\label{sec:reduction-framework}
Although the pacing and throttling gadgets are different, they serve exactly the same logical purpose. Separating this shared circuit logic from the mechanism-specific signal construction makes both reductions shorter and easier to verify. This section proves the circuit layer once; the two hardness sections will then need only to establish the hypotheses of the common lemma.

Three mechanism-specific facts are sufficient. First, decoded $0$-states and $1$-states must generate well-separated signals. Second, the target budget must act as a threshold test: a sufficiently small aggregate signal forces the high state, whereas a sufficiently large aggregate signal forces the low state. Third, if the output of a unary gate remains indeterminate, then the predecessor signal must lie in a budget-dependent ambiguity interval. Once these properties are available, the semantics of \NOT, \NOR, and \NPURIFY follow from the same gate argument.

Fix thresholds $0\le \ell<h\le1$. For a node state $z\in[0,1]$, we define the \emph{threshold decoder}
\begin{equation}
\label{eq:abstract-threshold-decoder}
 \dec_{\ell,h}(z):=
 \begin{cases}
  0,&0\le z\le\ell,\\
  \bot,&\ell<z<h,\\
  1,&h\le z\le1.
 \end{cases}.
\end{equation}
For each circuit node $v$, the reduction creates a scalar state $z_v$ and decodes it as $\chi(v):=\dec_{\ell,h}(z_v)$. Each source node $u$ also produces one nonnegative \emph{abstract signal} $s_u$, shared by all edges leaving $u$. We do not require $s_u$ to equal a realized payment in every state; we only require it to control the target's budget as specified below. For a target $v$, let $A_v:=\sum_{u\in N^-(v)}s_u$ be the total incoming signal.

The first requirement is a \emph{signal gap}: decoded zeros must send a small signal, and decoded ones must send a large signal. Concretely, assume that there are constants $0\le a_0<a_1$ such that, for every node $u$,
\begin{equation}
\label{eq:abstract-signal-gap}
 \chi(u)=0\implies s_u\le a_0,
 \qquad
 \chi(u)=1\implies s_u\ge a_1.
\end{equation}
The second requirement is \emph{budget forcing}. For an output node $v$ with budget $B$, suppose the mechanism-specific analysis gives thresholds $\tau_1(B)$ and $\tau_0(B)$ such that
\begin{equation}
\label{eq:abstract-budget-test}
 A_v<\tau_1(B)\implies\chi(v)=1,
 \qquad
 A_v>\tau_0(B)\implies\chi(v)=0.
\end{equation}
The third requirement is only needed for purification. If a unary output $v$ with budget $B$ remains indeterminate, its predecessor signal must be trapped in an interval that depends on $B$:
\begin{equation}
\label{eq:abstract-ambiguity}
 \chi(v)=\bot\implies s_u\in J(B).
\end{equation}
The interval $J(B)$ is the range in which the budget test alone cannot determine the output. Once these three properties are established, the gate argument no longer depends on the auction model.

\begin{lemma}[Abstract Gate Simulation]
\label{lem:abstract-gate-simulation}
Under the preceding setup, suppose that \Cref{eq:abstract-signal-gap,eq:abstract-budget-test,eq:abstract-ambiguity} hold.
\begin{enumerate}[label=(\roman*)]
\item If a budget $B_G$ satisfies
\[
  2a_0<\tau_1(B_G)
  \qquad\text{and}\qquad
  \tau_0(B_G)<a_1,
\]
then assigning budget $B_G$ to the output makes the decoded states satisfy the constraints of both \NOT and \NOR gates.
\item Suppose that budgets $B_1,B_2$ each satisfy
\[
  a_0<\tau_1(B_i),
  \qquad
  \tau_0(B_i)<a_1,
\]
and $J(B_1)\cap J(B_2)=\varnothing$. Assigning $B_1$ and $B_2$ to the two outputs makes the decoded states satisfy the \NPURIFY constraint.
\end{enumerate}
\end{lemma}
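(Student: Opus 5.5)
The argument is a direct case analysis on the decoded states of the gate inputs. It uses only the signal gap \eqref{eq:abstract-signal-gap}, the budget test \eqref{eq:abstract-budget-test}, and, for purification, the ambiguity property \eqref{eq:abstract-ambiguity}. Nonnegativity of the abstract signals will be used repeatedly, so that $A_v$ is at least each of its summands.

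For part (i), treat \NOT first. If $\chi(u)=0$, then $A_v=s_u\le a_0\le 2a_0<\tau_1(B_G)$, so \eqref{eq:abstract-budget-test} forces $\chi(v)=1$. If $\chi(u)=1$, then $A_v=s_u\ge a_1>\tau_0(B_G)$, which forces $\chi(v)=0$. When $\chi(u)=\bot$ there is nothing to check.

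For \NOR$(u,w;v)$, with $A_v=s_u+s_w$:
\begin{itemize}
\item If both inputs decode to $0$, then $A_v\le 2a_0<\tau_1(B_G)$, so $\chi(v)=1$.
\item If, say, $\chi(u)=1$, then $A_v\ge s_u\ge a_1>\tau_0(B_G)$ because $s_w\ge0$, so $\chi(v)=0$.
\end{itemize}
The factor $2$ in the hypothesis exists only to cover the two-zero case of \NOR. Here I use the fact that $N^-(v)$ is exactly the set of gate inputs, since each node is the output of exactly one gate.

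For part (ii), let $\NPURIFY(u;v,w)$ have outputs $v,w$ with budgets $B_1,B_2$. Each output has the single in-neighbour $u$, so $A_v=A_w=s_u$.
\begin{itemize}
\item \emph{Negation requirement.} The argument for \NOT applies verbatim to each output using $a_0<\tau_1(B_i)$ and $\tau_0(B_i)<a_1$: if $\chi(u)\in\{0,1\}$, then both outputs equal $1-\chi(u)$.
\item \emph{At least one Boolean output.} Suppose instead that $\chi(v)=\chi(w)=\bot$. Then \eqref{eq:abstract-ambiguity} applied to $v$ gives $s_u\in J(B_1)$, and applied to $w$ gives $s_u\in J(B_2)$. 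This contradicts $J(B_1)\cap J(B_2)=\varnothing$.
\end{itemize}
The point that must hold here is that both outputs observe the same scalar $s_u$. This is guaranteed by the setup, where a single signal is shared by all edges leaving $u$.

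I expect no real obstacle. The only care needed is bookkeeping:
\begin{itemize}
\item confirming that the in-neighbourhoods match the gate inputs, using the bounded-degree form of \Cref{thm:pure-circuit};
\item confirming that the unary-output hypothesis of \eqref{eq:abstract-ambiguity} applies to each \NPURIFY output.
\end{itemize}
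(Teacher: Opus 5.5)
Your proposal is correct and follows the paper's proof essentially step for step. It uses the same case analysis on Boolean inputs for \NOT and \NOR, with $2a_0$ covering the two-zero case and nonnegativity of the other summand covering the case where one input is $1$, and the same contradiction from $J(B_1)\cap J(B_2)=\varnothing$ and the shared signal $s_u$ for \NPURIFY. Your remarks that $N^-(v)$ coincides with the gate inputs, and that each \NPURIFY output is a unary output, spell out details the paper leaves implicit.
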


\begin{proof}
For \NOT, we consider the two Boolean inputs. A zero input sends at most $a_0<\tau_1(B_G)$ and therefore forces the output to one; a one input sends at least $a_1>\tau_0(B_G)$ and forces the output to zero.

For \NOR, we apply the same thresholds to the sum of the two input signals. Two zeros contribute at most $2a_0<\tau_1(B_G)$ and force the output to one. If either input is one, that input alone contributes at least $a_1>\tau_0(B_G)$ and forces the output to zero. These are exactly the constrained cases of \NOR.

For \NPURIFY, a Boolean input similarly forces both outputs to its negation. If the input is $\bot$ and both outputs were also $\bot$, the shared source signal $s_u$ would have to lie in both $J(B_1)$ and $J(B_2)$, contradicting their disjointness. Hence at least one output is Boolean, as required.
\end{proof}

The lemma contains all gate-semantic reasoning needed in the remainder of the paper. Accordingly, each mechanism-specific proof follows the same sequence: construct the complete market, establish the signal gap, derive budget forcing and the ambiguity interval, verify the numerical gate inequalities, and finally decode an arbitrary approximate equilibrium.

\section{\texorpdfstring{$\PPAD$}{PPAD}-Hardness of Approximate Pacing}
\label{sec:pacing-hardness}

We begin with pacing. Because the common framework already handles the gate semantics, the pacing reduction has one mechanism-specific task: transmit a source multiplier to its targets without allowing the targets to induce comparable payment back to the source. We solve this task with a geometric wire, then show that every approximate pacing equilibrium satisfies the abstract gate conditions and therefore decodes to a valid \PureCircuit solution.

\begin{theorem}[Endpoint-Tight Pacing $\PPAD$-Hardness]
  \label{thm:pacing-hardness}
For every fixed $\gamma\in[0,1)$, computing a $\gamma$-approximate pacing equilibrium is $\PPAD$-hard, even for instances $\mathcal I$ satisfying
\[
 \goodsupport(\mathcal I)=2,
 \qquad
 \biddersupport(\mathcal I)=O((1-\gamma)^{-2}).
\]
\end{theorem}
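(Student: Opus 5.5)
We reduce from bounded-degree \PureCircuit (\Cref{thm:pure-circuit}) and verify the hypotheses of \Cref{lem:abstract-gate-simulation}. Write $\beta:=1-\gamma$. Each node $v$ gets one variable bidder with multiplier $\alpha_v$, decoded by $\chi(v)=\dec_{L,h}(\alpha_v)$ with $L=\Theta(\beta^2)$ and $h$ close to $1$, say $h=1$ or $h=1-\Theta(\beta)$. For each interaction edge $u\to v$ we place a geometric wire of $K=\Theta(\beta^{-2})$ goods, each valued only by $u$ and $v$; this gives $\goodsupport=2$. The source values every level at $1/K$. The target values level $\ell$ at $1/(Kt_\ell)$, where the thresholds $t_\ell$ are geometric, say $t_\ell=(L/2)\cdot 2^{-(K-\ell)}$, so that all satisfy $t_\ell\le L/2$. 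Since each node has total degree at most three, each bidder touches at most $3K$ goods, so $\biddersupport=O(\beta^{-2})$. The abstract signal is $s_u:=\alpha_u$. An equilibrium exists by \cite{CKSM22}, and the construction is polynomial since $\gamma$ is fixed.

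\textbf{Wire lemma.} The first key step has two parts. (i) If $b:=\alpha_v>L$, then on every level the target bids $b/(Kt_\ell)>a/K$ for every $a\le1$. So the target wins every level and pays $a/K$ on each, for a total of exactly $\alpha_u$. (ii) In every state, including ties, the source wins level $\ell$ only if $a/K\ge b/(Kt_\ell)$, i.e.\ $b\le a t_\ell$. The levels the source can win therefore form a suffix in which $t_\ell\ge b/a$. The price the source pays is at most $b/(Kt_\ell)$, and summing over the geometric suffix gives less than $4/(3K)$. With out-degree at most two, a bidder's total backward spending is at most $8/(3K)=:\eta=O(\beta^2)$. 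In addition, a low target pays at most $a/K$ per level, for a total of at most $\alpha_u$. This gives the two-sided bound
\[
\text{forward}\le A_v \text{ always},\qquad \text{forward}=\alpha_v\text{-independent value }A_v\text{ when }\alpha_v>L.
\]

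\textbf{Budget forcing.} For output $v$ with budget $B$, spending is at most $A_v+\eta$. If $A_v+\eta<\beta B$, condition (d) forces $\alpha_v=1$, which decodes to $1$. So we set $\tau_1(B)=\beta B-\eta$. If $A_v>B$ and $\alpha_v>L$, the forward payment alone is $A_v>B$, which violates (c). So we set $\tau_0(B)=B$. For ambiguity, $\chi(v)=\bot$ with a unary predecessor $u$ forces $s_u\in[\beta B-\eta,\,B]=:J(B)$.

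\textbf{Parameters.} The signal gap is $a_0=L$ and $a_1=h$. For \NOT/\NOR take $B_G=1/2$. We need $2L<\beta/2-\eta$, which holds once the constants in $L$ and $K$ are chosen with $L,\eta\le c\beta^2$ for small $c$, and we need $\tau_0=1/2<h$. For \NPURIFY take $B_2=1/2$ and $B_1=\beta/4$. Then $J(B_1)\subseteq[\cdot,\beta/4]$ and $J(B_2)\subseteq[\beta/2-\eta,\cdot]$ are disjoint. The remaining condition $a_0=L<\beta\cdot\beta/4-\eta$ needs $L,\eta\le c\beta^2$ with $c<1/8$, which fixes the $\Theta(\beta^2)$ constants.

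\textbf{Decoding and the main obstacle.} Given any $\gamma$-approximate equilibrium, the decoded $\chi$ satisfies all gates by \Cref{lem:abstract-gate-simulation}, which completes the reduction. I expect the main obstacle to be the uniform backward-leakage bound (ii) over all states and tie-split allocations. There are two points to check. First, a fractional tie allocation still charges the source at most the tie price. Second, the suffix sum stays below $4/(3K)$ even when the smallest winning threshold sits just above $b/a$. I would handle the second point by bounding the suffix sum by its first term times $\sum_i 2^{-i}$, and using $b/(t_\ell)\le a\le1$ at the boundary level.
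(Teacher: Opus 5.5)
Your proposal follows the paper's proof essentially step for step: the same two-bidder geometric wire with all $t_\ell\le L/2$, exact forward transmission of $\alpha_u$ when $\alpha_v>L$, $O(1/K)$ backward leakage, thresholds $\tau_1(B)=\beta B-\eta$ and $\tau_0(B)=B$, the unary ambiguity interval, and then \Cref{lem:abstract-gate-simulation}. It differs only in constants: spacing ratio $2$ instead of $4$, decoder threshold $h=1$ instead of $U=1/2$, and budgets $1/2,\beta/4$ instead of $1/4,\beta/8$.

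There is one arithmetic slip to fix. With ratio-$2$ spacing, your own suffix argument gives only $\frac{a}{K}\sum_{i\ge0}2^{-i}<\frac{2a}{K}$ per wire, not $\frac{4a}{3K}$; the $4/3$ constant comes from the paper's ratio-$4$ spacing. A tie at the first suffix level with a long suffix gets arbitrarily close to $2a/K$. So take $\eta:=4/K$, or use ratio $4$. Your free choice of $K=\Theta(\beta^{-2})$ with $L,\eta\le c\beta^2$ absorbs this change.
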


Fix $\gamma\in[0,1)$ and write $\beta:=1-\gamma>0$. We reduce from the bounded-degree \PureCircuit instance of \Cref{thm:pure-circuit}. Set
\[
 L:=\frac{\beta^2}{40},
 \qquad
 U:=\frac12.
\]
For a multiplier profile $\alpha$, we decode node $v$ by
\begin{equation}
\label{eq:pacing-decoder}
 \chi(v):=\dec_{L,U}(\alpha_{b_v}).
\end{equation}
The choice $L=\Theta(\beta^2)$ is matched to the wire length below: with $K=\Theta(\beta^{-2})$ levels, the total backward leakage is of the same order as the low decoding region.

\subsection{Construction}
\label{subsec:pacing-construction}

We now construct the reduction from a \PureCircuit instance $\mathcal C$. Each circuit node $v$ becomes one \emph{variable bidder} $b_v$, and each interaction edge $u\to v$ becomes a geometric wire from $b_u$ to $b_v$. The wire has two simultaneous goals: the target should read the source multiplier through its second-price payment, while any reverse payment incurred by the source should be confined to a small geometric tail.

Set
\[
 T:=\frac L2,
 \qquad
 K:=\left\lceil\frac{30}{\beta^2}\right\rceil+1,
 \qquad
 \eta:=\frac3K,
\]
so $\eta<\beta^2/10$. At level $\ell\in[K]$, we define
\[
 t_\ell:=T4^{\ell-K},
\]
so $0<t_1<\cdots<t_K=T$.

\begin{construction}[Geometric Wire]
\label{constr:pacing-geometric-wire}
For every directed interaction edge $u\to v$, create goods $g^{uv}_1,\ldots,g^{uv}_K$. Only $b_u$ and $b_v$ value these goods, and for every $\ell\in[K]$ set
\begin{equation}
\label{eq:pacing-wire-valuations}
 v_{b_u,g^{uv}_\ell}=\frac1K,
 \qquad
 v_{b_v,g^{uv}_\ell}=\frac{1}{Kt_\ell}.
\end{equation}
\end{construction}

It remains to assign budgets to the variable bidders. We use
\[
 B_G:=\frac14
\]
for every \NOT and \NOR output. For an \NPURIFY gate, assign its two outputs the budgets
\begin{equation}
\label{eq:pacing-gate-budgets}
 B_1:=\frac\beta8,
 \qquad
 B_2:=\frac14,
\end{equation}
respectively. Every node is the output of exactly one gate, so this assigns one budget to every variable bidder. All unspecified valuations are zero. This completes the construction.

To see why the wire is one-way, fix an edge $u\to v$ and abbreviate $a:=\alpha_{b_u}$ and $b:=\alpha_{b_v}$. At level $\ell$, the two scaled bids are $a/K$ and $b/(Kt_\ell)$, and for $a>0$ they tie exactly when $b/a=t_\ell$. Because the thresholds $t_\ell$ grow geometrically, the levels on which the source can win form a suffix whose second prices decay geometrically.

\subsection{Proof of Correctness}
\label{subsec:pacing-correctness}

The correctness proof follows the common template. We first prove that the geometric wire transmits the source multiplier forward while leaking only $O(1/K)$ payment backward. We then translate these payment bounds into a budget threshold test and identify the ambiguity interval for an indeterminate unary output. The final step checks the numerical gate inequalities and invokes \Cref{lem:abstract-gate-simulation}.

\paragraph{Signal behavior.}
For one wire $u\to v$, let $W_{uv}$ denote its $K$ goods and abbreviate $a:=\alpha_{b_u}$ and $b:=\alpha_{b_v}$. The next lemma shows the two directional payment bounds that drive the reduction.

\begin{lemma}[One-Way Geometric Wire]
\label{lem:pacing-geometric-wire}
For every multiplier vector $\alpha$ and every allocation $x$ satisfying \Cref{def:approx-pacing}(a)--(b),
\begin{align}
  \sum_{j\in W_{uv}}x_{b_v j}p_j(\alpha)&\le a,
 \label{eq:pacing-target-payment-upper}\\
 \sum_{j\in W_{uv}}x_{b_u j}p_j(\alpha)&< \frac{4}{3K}.
 \label{eq:pacing-source-leakage}
\end{align}
Moreover, if $b>L$, then $b_v$ uniquely wins every good on the wire and
\begin{equation}
  \label{eq:pacing-exact-forward-payment}
  \sum_{j\in W_{uv}}x_{b_v j}p_j(\alpha)=a.
\end{equation}
\end{lemma}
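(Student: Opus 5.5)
The plan is a good-by-good analysis of the wire. Each good $g^{uv}_\ell$ has exactly two positive bidders, $b_u$ bidding $a/K$ and $b_v$ bidding $b/(Kt_\ell)$, so the second price is the smaller of the two bids. Conditions (a)--(b) of \Cref{def:approx-pacing} say that only a maximal bidder receives positive allocation, and that total allocation is at most $1$. I would first dispose of the degenerate cases $a=0$ or $b=0$. If both bids are zero, every price is zero. If exactly one is zero, that bidder cannot receive the good unless the highest bid is zero, and the price charged to the other bidder is $0$. In all of these cases both sums vanish.

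\textbf{Forward bound \eqref{eq:pacing-target-payment-upper}.} On each good $b_v$ pays at most $x_{b_v j}\cdot\min(a/K,\,b/(Kt_\ell))\le a/K$. Summing over the $K$ levels gives at most $a$.

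\textbf{Exact forward payment.} Suppose $b>L$. Since $t_\ell\le T=L/2$ and $a\le 1$, we get $b/(Kt_\ell)>2/K>a/K$. So $b_v$ is the unique highest bidder on every level. By (a) the source receives nothing, and by (b) the target receives the full unit, because the highest bid is positive. Each price is $a/K$, so the sum is exactly $a$.

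\textbf{Leakage bound \eqref{eq:pacing-source-leakage}.} This is the main step. The source can receive positive allocation on level $\ell$ only if $a/K\ge b/(Kt_\ell)$, that is, $t_\ell\ge b/a$ (for $a>0$). Since $t_\ell$ is increasing in $\ell$, the set of such levels is a suffix $\{\ell_0,\dots,K\}$. On each such level the price is the smaller bid, $b/(Kt_\ell)$, and $x_{b_u j}\le 1$. Hence the leakage is at most $\sum_{\ell\ge\ell_0} b/(Kt_\ell)$.

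Now $b/t_{\ell_0}\le a\le 1$, and the ratios $t_{\ell_0}/t_\ell=4^{\ell_0-\ell}$ decay geometrically. Therefore
\[
\sum_{\ell\ge\ell_0}\frac{b}{Kt_\ell}\le \frac{a}{K}\sum_{k\ge0}4^{-k}=\frac{4a}{3K}.
\]
The only subtle point is strictness. The geometric series is finite, so $\sum_{k=0}^{K-\ell_0}4^{-k}<4/3$ strictly, and with $a\le1$ this gives the strict inequality $<4/(3K)$. If the suffix is empty, the sum is $0<4/(3K)$.

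Ties at $t_\ell=b/a$ need no separate treatment, since there the price equals both bids and the bound $b/(Kt_\ell)\le a/K$ still applies.
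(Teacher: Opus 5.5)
Your proposal is correct and follows essentially the same route as the paper. In both arguments, the target's per-level price is at most the source bid $a/K$; the source's wins are confined to a suffix of levels with $t_\ell\ge b/a$, whose prices form a finite geometric series strictly below $4a/(3K)$; and $t_\ell\le L/2$ makes the target the unique highest bidder when $b>L$.
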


\begin{proof}
If $b_v$ receives a positive fraction of a wire good, her payment per unit is at most the source bid $a/K$. Summing over the $K$ levels gives \Cref{eq:pacing-target-payment-upper}.

For the reverse direction, the claim is immediate when $a=0$. Assume $a>0$ and set $z:=b/a$. The source can receive level $\ell$ only if she is a highest bidder, which requires $t_\ell\ge z$. If no level satisfies this inequality, the source pays zero. Otherwise, let $k$ be the smallest index with $t_k\ge z$. Even if the source receives every good at levels $\ell\ge k$, her total payment is at most
\[
 \sum_{\ell=k}^K\frac{b}{Kt_\ell}
 =\frac{az}{Kt_k}\sum_{r=0}^{K-k}4^{-r}
 < \frac{4a}{3K}
 \le \frac{4}{3K},
\]
which proves \Cref{eq:pacing-source-leakage}.

Finally, suppose $b>L$. The conclusion is immediate if $a=0$, so assume $a>0$. Since $t_\ell\le T=L/2$ and $a\le1$,
\[
 \frac{b/(Kt_\ell)}{a/K}=\frac{b}{at_\ell}>\frac{L}{T}=2.
\]
Hence $b_v$ is the unique highest bidder on every level. Full allocation gives her every wire good, and she pays $a/K$ on each one. Summing over the $K$ levels proves \Cref{eq:pacing-exact-forward-payment}.
\end{proof}

Because the source circuit has bounded out-degree, the reverse leakage remains small even after all outgoing wires are combined.

\begin{corollary}
\label{cor:pacing-backward-leakage}
Every variable bidder pays less than $\eta$ on all wires corresponding to edges leaving her circuit node.
\end{corollary}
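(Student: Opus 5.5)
The corollary follows by summing the per-wire reverse bound of \Cref{lem:pacing-geometric-wire} over all outgoing wires and using the degree bound of \Cref{thm:pure-circuit}. Fix a variable bidder $b_u$ and an equilibrium pair $(\alpha,x)$ satisfying \Cref{def:approx-pacing}(a)--(b). Her payment on the wires leaving $u$ is
\[
 \sum_{v:\,u\to v}\ \sum_{j\in W_{uv}} x_{b_u j}\,p_j(\alpha).
\]
Here each inner sum involves only the goods of one wire. By \Cref{constr:pacing-geometric-wire}, different edges receive disjoint sets of goods, so no good is counted twice.

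First, I apply \Cref{eq:pacing-source-leakage} to each outgoing edge $u\to v$. This bounds each inner sum strictly by $4/(3K)$. That bound is uniform in $\alpha$ and $x$, since the proof of \Cref{lem:pacing-geometric-wire} already used $a\le 1$.

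Second, I invoke the degree structure of \Cref{thm:pure-circuit}: every node has out-degree at most two. Hence there are at most two outgoing wires. If there are none, the payment is $0<\eta$. Otherwise the total is strictly less than
\[
 2\cdot\frac{4}{3K}=\frac{8}{3K}<\frac{3}{K}=\eta .
\]
Strictness holds because at least one strict term appears.

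There is no real obstacle. The only point to check is that the interaction graph has out-degree at most two even for \NPURIFY gates, whose input feeds two outputs. This holds because the allowed degree pairs $(d_{\mathrm{in}},d_{\mathrm{out}})$ in \Cref{thm:pure-circuit} cap $d_{\mathrm{out}}$ at $2$. It also helps that distinct named inputs and outputs give distinct edges, so no wire is duplicated.
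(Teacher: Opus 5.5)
Your proposal is correct and is essentially the paper's proof: apply the per-wire reverse-leakage bound \eqref{eq:pacing-source-leakage} to each of the at most two outgoing wires allowed by \Cref{thm:pure-circuit}, and use $8/(3K)<3/K=\eta$. Your extra remarks on the disjointness of wire goods and on the zero-out-degree case are harmless; in fact, every node has out-degree $1$ or $2$ in that formulation.
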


\begin{proof}
Every node has out-degree at most two by \Cref{thm:pure-circuit}. Applying \Cref{eq:pacing-source-leakage} to each outgoing wire gives total payment less than
\[
 \frac{8}{3K}<\frac3K=\eta.
\]
\end{proof}

\paragraph{Budget forcing and ambiguity.}
For a node $v$, we define the aggregate incoming signal
\[
 A_v:=\sum_{u\in N^-(v)}\alpha_{b_u}.
\]
The next lemma gives the desired high--low response. A small $A_v$ leaves enough budget slack to force multiplier $1$, while a large $A_v$ would already violate the budget if the output multiplier remained above the low threshold.

\begin{lemma}[Pacing Budget Forcing]
\label{lem:pacing-budget-forcing}
If bidder $b_v$ has budget $B_v$, every $\gamma$-approximate pacing equilibrium satisfies
\begin{align*}
 A_v+\eta\le\beta B_v &\implies \alpha_{b_v}=1,\\
 A_v>B_v &\implies \alpha_{b_v}\le L.
\end{align*}
\end{lemma}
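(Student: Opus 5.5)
Both implications follow by decomposing the spending of $b_v$ by wire type. The only goods $b_v$ values are the incoming wires $W_{uv}$ for $u\in N^-(v)$ and the outgoing wires $W_{vw}$ for her successors $w$. Every good has exactly these two interested bidders, so $\spend_{b_v}(\alpha,x)$ splits into the incoming payment $\sum_{u\in N^-(v)}\sum_{j\in W_{uv}}x_{b_vj}p_j(\alpha)$ and the outgoing (backward) payment $\sum_{w}\sum_{j\in W_{vw}}x_{b_vj}p_j(\alpha)$. These are the quantities controlled by \Cref{lem:pacing-geometric-wire} and \Cref{cor:pacing-backward-leakage}. Conditions (a)--(b) of \Cref{def:approx-pacing} hold at any approximate equilibrium, so both results apply.

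\textbf{First implication.} By \Cref{eq:pacing-target-payment-upper}, applied with the source multiplier $a=\alpha_{b_u}$, the incoming payment on each wire $u\to v$ is at most $\alpha_{b_u}$. Summing over $N^-(v)$ bounds the incoming payment by $A_v$. By \Cref{cor:pacing-backward-leakage}, the outgoing payment is strictly less than $\eta$. Hence
\[
\spend_{b_v}(\alpha,x)<A_v+\eta\le\beta B_v=(1-\gamma)B_v,
\]
and condition (d) of \Cref{def:approx-pacing} forces $\alpha_{b_v}=1$. The strict inequality from the corollary is what lets the hypothesis be a non-strict $\le$.

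\textbf{Second implication.} Suppose, for contradiction, that $A_v>B_v$ and $\alpha_{b_v}>L$. Then every incoming wire has target multiplier $b=\alpha_{b_v}>L$. By \Cref{eq:pacing-exact-forward-payment}, $b_v$ pays exactly $\alpha_{b_u}$ on each wire $W_{uv}$, so the incoming payment equals $A_v$. Since all payments are nonnegative, we get $\spend_{b_v}\ge A_v>B_v$, which contradicts budget feasibility (c). Therefore $\alpha_{b_v}\le L$.

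There is no real obstacle here, since the lemma is essentially bookkeeping. The one point that needs care is the claim that the support of $b_v$ consists only of her incident wire goods. This holds by \Cref{constr:pacing-geometric-wire} and the fact that all unspecified valuations are zero. One should also note that a single bidder may appear as a source on some wires and as a target on others, and the decomposition above treats these two roles separately.
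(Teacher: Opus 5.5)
Your proof is correct and matches the paper's argument step for step. For the first implication you bound incoming wire payments by $A_v$ using the forward upper bound and outgoing payments by less than $\eta$ using the leakage corollary, which triggers condition~(d); for the second you use exact forward transmission when $\alpha_{b_v}>L$ to push spending above the budget. One point that both you and the paper leave implicit is harmless: $b_v$ can be allocated a good she does not value only when its highest bid is $0$, and then its price is $0$, so such goods never add to her spending.
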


\begin{proof}
By \Cref{eq:pacing-target-payment-upper}, bidder $b_v$ pays at most $\alpha_{b_u}$ on each incoming wire $u\to v$. By \Cref{cor:pacing-backward-leakage}, she pays less than $\eta$ across all outgoing wires. These are all goods that she values, so
\[
 \spend_{b_v}(\alpha,x)<A_v+\eta\le\beta B_v.
\]
Condition~\Cref{def:approx-pacing}(d) therefore forces $\alpha_{b_v}=1$.

For the second implication, suppose $A_v>B_v$ but $\alpha_{b_v}>L$. Then \Cref{eq:pacing-exact-forward-payment} applies to every incoming wire, so $b_v$ pays exactly $A_v>B_v$ on those wires alone, contradicting budget feasibility.
\end{proof}

The same threshold test also pins down the only predecessor range compatible with an indeterminate unary output. We will use the following numerical bounds repeatedly; in particular, they imply $L<U$:
\begin{equation}
  \label{eq:pacing-numerical-bounds}
  L+\eta<\beta^2/8,
  \qquad
  2L+\eta<3\beta^2/20.
\end{equation}

\begin{lemma}[Pacing Ambiguity Interval]
\label{lem:pacing-ambiguity-interval}
Consider a unary-gate output $v$ with budget $B$ and predecessor $u$. If $\chi(v)=\bot$, then
\[
 \alpha_{b_u}\in J(B):=(\beta B-\eta,B].
\]
\end{lemma}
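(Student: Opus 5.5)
The plan is to argue by contraposition, applying \Cref{lem:pacing-budget-forcing} to the unary output $v$. For a unary gate ($\NOT$, or one output of $\NPURIFY$), $v$ has exactly one in-neighbor $u$. Hence the aggregate incoming signal collapses to
\[
 A_v=\alpha_{b_u}.
\]
The indeterminacy hypothesis $\chi(v)=\bot$ means, by the decoder \eqref{eq:pacing-decoder}, that
\[
 L<\alpha_{b_v}<U=\tfrac12.
\]
In particular, $\alpha_{b_v}\ne1$ and $\alpha_{b_v}>L$. We then rule out the two tails of the claimed interval $J(B)=(\beta B-\eta,\,B]$ separately.

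\textbf{Lower end.} Suppose $\alpha_{b_u}\le\beta B-\eta$. Then $A_v+\eta\le\beta B$, so the first implication of \Cref{lem:pacing-budget-forcing} forces $\alpha_{b_v}=1$. This gives $\chi(v)=1$, contradicting $\chi(v)=\bot$. Hence $\alpha_{b_u}>\beta B-\eta$.

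\textbf{Upper end.} Suppose $\alpha_{b_u}>B$. Then $A_v>B$, so the second implication of \Cref{lem:pacing-budget-forcing} gives $\alpha_{b_v}\le L$. This gives $\chi(v)=0$, again a contradiction. Hence $\alpha_{b_u}\le B$.

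Combining the two ends yields $\alpha_{b_u}\in(\beta B-\eta,B]$.

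There is no real obstacle here: the lemma is a direct repackaging of \Cref{lem:pacing-budget-forcing}. The one point that needs care is confirming that a unary gate output has a single in-neighbor, so that $A_v=\alpha_{b_u}$ exactly. This holds because every node is the output of exactly one gate, and the only in-edges to $v$ come from that gate's inputs. We also need $L<U$, so that the $\bot$ region is nonempty and the decoder is well defined; this follows from \eqref{eq:pacing-numerical-bounds}, since $L<\beta^2/8<1/2$.
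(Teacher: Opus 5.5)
Your proposal is correct and follows essentially the same argument as the paper: both ends of $J(B)$ come from the two implications of \Cref{lem:pacing-budget-forcing}, each contradicting $\chi(v)=\bot$. Your explicit check that a unary output has the single in-neighbor $u$, so that $A_v=\alpha_{b_u}$, is the step the paper leaves implicit.
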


\begin{proof}
Write $a:=\alpha_{b_u}$. If $a+\eta\le\beta B$, then \Cref{lem:pacing-budget-forcing} forces $\alpha_{b_v}=1$, contradicting $\chi(v)=\bot$. Hence $a>\beta B-\eta$. If $a>B$, the same lemma forces $\alpha_{b_v}\le L$, again contradicting $\chi(v)=\bot$. Therefore $a\in(\beta B-\eta,B]$.
\end{proof}

\paragraph{Gate verification and decoding.}
We now match the pacing construction to the common framework. Set
\[
 z_u:=\alpha_{b_u},
 \qquad
 s_u:=\alpha_{b_u},
 \qquad
 a_0:=L,
 \qquad
 a_1:=U=\frac12.
\]
By \Cref{lem:pacing-budget-forcing}, the abstract thresholds are
\[
 \tau_1(B)=\beta B-\eta,
 \qquad
 \tau_0(B)=B,
\]
and \Cref{lem:pacing-ambiguity-interval} gives $J(B)=(\beta B-\eta,B]$.

For the common \NOT/\NOR budget $B_G=1/4$, \Cref{eq:pacing-numerical-bounds} yields
\[
 2L+\eta<\frac{3\beta^2}{20}<\frac\beta4=\beta B_G,
\]
so $2a_0<\tau_1(B_G)$, while $\tau_0(B_G)=1/4<a_1$. For the two \NPURIFY budgets in \Cref{eq:pacing-gate-budgets}, the same bounds give $L+\eta<\beta B_i$, hence $a_0<\tau_1(B_i)$, and $\tau_0(B_i)=B_i<a_1$ for $i\in\{1,2\}$. Finally,
\[
 J(B_1)=\left(\frac{\beta^2}{8}-\eta,\frac\beta8\right],
 \qquad
 J(B_2)=\left(\frac\beta4-\eta,\frac14\right]
\]
are disjoint because $\eta<\beta^2/10\le\beta/10$ implies $\beta/8<\beta/4-\eta$. Thus every hypothesis of \Cref{lem:abstract-gate-simulation} is satisfied.

We can now decode an arbitrary approximate equilibrium and complete the reduction.

\begin{proof}[Proof of~\Cref{thm:pacing-hardness}]
Take the market constructed from a bounded-degree \PureCircuit instance $\mathcal C$, and let $(\alpha,x)$ be any $\gamma$-approximate pacing equilibrium. Decode every circuit node using \Cref{eq:pacing-decoder}. If $\chi(u)=0$, then the source signal $s_u=\alpha_{b_u}$ is at most $L=a_0$; if $\chi(u)=1$, then $s_u\ge U=a_1$. \Cref{lem:pacing-budget-forcing,lem:pacing-ambiguity-interval} provide the corresponding abstract thresholds and intervals, and the parameter verification above shows that the budgets assigned in the construction satisfy all conditions of \Cref{lem:abstract-gate-simulation}. Applying that lemma to every gate, we obtain a decoded assignment $\chi$ satisfying every \NOT, \NOR, and \NPURIFY constraint of $\mathcal C$. Thus a valid solution to the \PureCircuit instance can be extracted from every approximate pacing equilibrium in polynomial time.

Every wire good has exactly two positive valuations, hence $\goodsupport=2$. Since each circuit node has total interaction degree at most three, each variable bidder is incident to at most three wires and therefore values at most $3K$ goods. Consequently,
\[
 \biddersupport\le3K=O((1-\gamma)^{-2}).
\]
The interaction graph has $O(|V|)$ edges, so the reduction creates $|V|$ bidders and $O(|V|K)$ goods. Together with \Cref{thm:pure-circuit}, this proves the theorem.
\end{proof}

Because the construction exposes its dependence on $1-\gamma$ explicitly, the same reduction also remains polynomial when the distance to the endpoint is inverse-polynomial rather than constant.

\begin{corollary}[Inverse-Polynomial Pacing Gap]
  \label{cor:pacing-inverse-polynomial-gap}
  For every fixed nonnegative integer $c$, the reduction remains polynomial when, on an $N$-node source instance, $1-\gamma=N^{-c}$. Indeed, $K=O(N^{2c})$, and the geometric wire values have $O(K+\log N)$ bits. Consequently, in the uniform search problem in which the parameter $\gamma\in[0,1)$ is part of the input, $\PPAD$-hardness persists even when $1-\gamma$ is inverse-polynomial in the total binary input length.
\end{corollary}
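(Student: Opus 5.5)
Since the correctness argument in \Cref{thm:pacing-hardness} is valid for every $\beta\in(0,1]$ and the constants $L=\beta^2/40$, $K=\lceil 30/\beta^2\rceil+1$, $\eta=3/K$, $t_\ell=T4^{\ell-K}$ and the budgets $B_G,B_1,B_2$ are explicit functions of $\beta$, the plan is to re-run the same reduction with $\beta=N^{-c}$ given as input and verify only that the output is polynomial. No step of \Cref{lem:pacing-geometric-wire}, \Cref{cor:pacing-backward-leakage}, \Cref{lem:pacing-budget-forcing}, \Cref{lem:pacing-ambiguity-interval} or the numerical checks (\Cref{eq:pacing-numerical-bounds} and the disjointness of $J(B_1),J(B_2)$) uses that $\beta$ is constant. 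They use only $\beta\in(0,1]$ and the stated algebraic relations among $L,T,K,\eta$. So correctness transfers verbatim, and every approximate equilibrium still decodes, in polynomial time, to a \PureCircuit solution.

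For the size bound, $K=\lceil 30N^{2c}\rceil+1=O(N^{2c})$, so the market has $|V|$ bidders and $O(|V|K)=O(N^{2c+1})$ goods, with $\biddersupport\le 3K$. For the bit lengths, $\beta$ is rational with $O(c\log N)$ bits. Hence $L$, $T$ and the budgets $B_G=1/4$, $B_1=\beta/8$, $B_2=1/4$ have $O(\log N)$ bits. The value $1/(Kt_\ell)=4^{K-\ell}/(KT)$ is a rational whose numerator has $O(K)$ bits and whose denominator has $O(\log K+\log N)$ bits. This gives $O(K+\log N)$ bits per value, and each value is computable in time polynomial in $K$. So the entire instance is written in time $\mathrm{poly}(N)$.

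For the uniform statement, where $\gamma$ is part of the input, I would reduce from the fixed-degree \PureCircuit of \Cref{thm:pure-circuit} as follows. Given an $N$-node instance, output the market together with $\gamma=1-N^{-c}$. The total binary input length $M$ of the produced instance is polynomial in $N$, and $M\ge N$. Therefore $1-\gamma=N^{-c}$ is at least inverse-polynomial in $M$, namely $\ge M^{-c}$. Conversely, $N\le M$ gives $1-\gamma\le\mathrm{poly}(M)^{-1}$ as well, since $M=O(N^{2c+1}\cdot N^{2c})$, so $1-\gamma$ is genuinely inverse-polynomial in $M$.

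The only step needing care, and the one I expect to be the main obstacle, is the exponential-looking wire value $4^{K-\ell}$. The point to stress is that it is exponential in $K$ in magnitude but only linear in $K$ in bit length, which is what makes the construction polynomial. Everything else is bookkeeping.
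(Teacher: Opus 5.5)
Your proposal is correct and follows the paper's own justification: correctness holds uniformly for all $\beta\in(0,1]$, $K=O(N^{2c})$, and the wire values $4^{K-\ell}\cdot 80N^{2c}/K$ have $O(K+\log N)$ bits, so the instance has polynomial encoding length. One small wording slip: the upper bound $1-\gamma\le \mathrm{poly}(M)^{-1}$ comes from $M\le \mathrm{poly}(N)$, which you do cite; $N\le M$ gives only the lower bound $1-\gamma\ge M^{-c}$.
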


\section{\texorpdfstring{$\PPAD$}{PPAD}-Hardness of Approximate Throttling}
\label{sec:throttling-hardness}

We next turn to throttling. The gate logic remains unchanged, but the signal obstacle is different. Variable bidders create no backward payment leakage, yet every incoming payment is attenuated by the target's own participation probability. A direct analogue of the pacing wire therefore cannot create a robust low--high gap. We instead build a blocker amplifier whose joint absence probability magnifies the source state, and then verify that the amplified signal satisfies the same abstract gate conditions.

\begin{theorem}[Endpoint-Tight Throttling $\PPAD$-Hardness]
\label{thm:throttling-hardness}
For every fixed $\delta\in(0,1)$, computing a $\delta$-approximate throttling equilibrium is $\PPAD$-hard. Writing $H:=1-\delta$, hardness holds even for instances $\mathcal I$ satisfying
\[
 \goodsupport(\mathcal I),\ \biddersupport(\mathcal I)
 =O\left(H^{-1}(1+\log(1/H))\right).
\]
\end{theorem}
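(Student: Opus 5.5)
We reduce from the bounded-degree \PureCircuit instance of \Cref{thm:pure-circuit} and follow the same template as the pacing proof, verifying the hypotheses of \Cref{lem:abstract-gate-simulation}. Fix $\delta\in(0,1)$, set $H:=1-\delta$ and $L:=H^3/8$, and decode node $v$ by $\chi(v):=\dec_{L,H}(\theta_{b_v})$. Each node $v$ gets a variable bidder $b_v$. Each node $u$ with at least one successor also gets $K$ blocker bidders $z_{u,1},\ldots,z_{u,K}$, with $K=O(H^{-1}(1+\log(1/H)))$ fixed later.

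\textbf{Blocker control.} For each blocker $z_{u,q}$ we create a control good on which $b_u$ bids above $z_{u,q}$. Then $z_{u,q}$ pays only when it participates, $b_u$ is absent, and some third low bid is present. One option is a fixed "anchor" bidder with budget large enough that it always participates with probability $1$. In that case the blocker spends $c\,\theta_{z}(1-\theta_u)$ for a constant price $c$.

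We choose the blocker budget $B_z$ and the price $c$ so that two things hold:
\begin{itemize}
\item if $\theta_u\le L$, the spending stays below $HB_z$ unless $\theta_z\ge H$, so complementarity forces $\theta_z\ge H$;
\item if $\theta_u\ge H$, feasibility or complementarity forces $\theta_z\le H/4$.
\end{itemize}
If the plain formula does not give both implications, we swap the roles and let $b_u$ suppress the blocker's competitor instead. I expect a small case analysis here, and I would verify both implications directly from \Cref{eq:throttling-payment}.

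\textbf{Signal goods.} For each edge $u\to v$ we create a signal good with bid order blockers $>$ $b_v$ $>$ $b_u$, where $b_u$ bids $p$. The bundle $z_{u,\cdot}$ is shared by all of $u$'s out-edges. Since blockers are only bid ahead of the target, I would arrange the blocker bids so that blockers do not pay meaningfully on signal goods. For example, a blocker pays only when it beats $b_v$. To keep that payment harmless, we either give the blockers large budgets together with a separate forcing mechanism, or include these payments in the blocker analysis as an $O(1)$ term, which is small compared with $B_z$.

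Using \Cref{eq:throttling-payment}, the spending of $b_v$ is exactly $S_v=\theta_v\sum_{u\in N^-(v)}y_u$, where $y_u=p\,\theta_u\prod_q(1-\theta_{z_{u,q}})$. Also, $b_v$ spends nothing on its outgoing goods, because there it is the lowest bidder. Take $s_u:=y_u$. The signal gap then holds with
\[
a_0=pL(1-H)^K,\qquad a_1=pH(1-H/4)^K.
\]

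\textbf{Budget forcing and ambiguity.} Budget forcing reads as follows:
\begin{itemize}
\item If $A_v<HB$, then $S_v<HB$, so $\theta_v\ge H$, which means $\chi(v)=1$.
\item If $A_v>B/L$, then $\theta_v>L$ would give $S_v>B$, so $\chi(v)=0$.
\end{itemize}
This gives $\tau_1(B)=HB$, $\tau_0(B)=B/L$ and $J(B)=[HB,B/L]$.

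The gate inequalities require $2a_0<HB_G$ and $B_G/L<a_1$, which is possible once $2a_0<HLa_1$. For \NPURIFY we need disjoint intervals $[HB_i,B_i/L]$. Taking $B_2>B_1/(HL)$ with both budgets inside $(a_0/H,La_1)$ needs $a_0/H\cdot 1/(HL)<La_1$, that is, roughly the master inequality $4a_0<HL^2a_1$ up to constants. This reduces to
\[
\Bigl(\tfrac{1-H}{1-H/4}\Bigr)^K<H^{O(1)}.
\]
Since $(1-H)/(1-H/4)\le 1-3H/4$, it suffices to take $K=\lceil C H^{-1}(1+\log(1/H))\rceil$. The gate lemma then finishes the argument.

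\textbf{Sparsity and main obstacle.} Each signal good carries $K+2$ positive bids. Each blocker has $O(1)$ goods, each variable bidder has $O(K)$ goods (its control goods plus at most three signal goods), and anchors can be duplicated per good. This gives the stated support bounds.

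I expect the main obstacle to be the blocker control step. It has to produce the two-sided response uniformly in $\theta_u$, even though blockers incur payments on signal goods. Moreover, these payments depend on $\theta_v$ and on the other blockers. I would first try to neutralize them by giving the blockers bids on signal goods only slightly above $b_v$'s bid, so that their signal-good spending is bounded by a tiny constant times $p$. This term can then be absorbed into the choice of $B_z$.
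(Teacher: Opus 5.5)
\textbf{There is a genuine gap, and it sits at the step you flag as the main obstacle: your control good pushes the blockers in the wrong direction.}

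With $b_u$ ranked above $z_{u,q}$ and an anchor below, the blocker's control payment is $c\,\theta_z\theta_{\mathrm{anc}}(1-\theta_u)$, which is \emph{decreasing} in $\theta_u$. In throttling, complementarity can only force participation up, and feasibility can only cap it when spending is large. So a high source can never force $\theta_z\le H/4$. At $\theta_u=1$ the control payment vanishes, and for $\theta_z<H$ the blocker's spending is just its signal-good spending. You must keep that below $HB_z$ anyway for the low-source implication, so complementarity then forces $\theta_z\ge H$.

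Quantitatively, ignoring the small signal-good terms, the low-source implication needs essentially $c\le B_z$. Capping $\theta_z$ at $H/4$ by feasibility at $\theta_u=H$ would need $c\,(H/4)(1-H)\ge B_z$. These two requirements are incompatible. Wherever this control good does bind, it gives ``low source $\Rightarrow$ few blockers''. That inverts, rather than amplifies, the gap in $y_u=p\theta_u\prod_q(1-\theta_{z_{u,q}})$.

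The same design also invalidates your identity $S_v=\theta_vA_v$. Because $b_u$ outranks the blocker and the anchor on each of its $K$ control goods, it pays there whenever it participates. This adds an unaccounted term of order $Kc\,\theta_u$. The control price must be large compared with $p$ for a blocker to track its source rather than its own signal-good payments. So this term swamps gate budgets, which are at most of order $La_1<Lp$. Two smaller points: in a $\delta$-approximate equilibrium an anchor is only forced to $\theta\ge H$, not to $1$; and your fallback of ``swapping roles'' is never specified. As written, the blocker response, which is the core of the amplifier, is unproved.

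\textbf{The missing idea is to reverse the order on the control good and drop the anchor.} On $c_{u,q}$, let $z_{u,q}$ bid $2$ and $b_u$ bid $1$. Then:
\begin{itemize}
\item The blocker pays exactly $\theta_z\theta_u$, which is increasing in $\theta_u$.
\item The source never pays on control goods, so $S_v=\theta_vA_v$ really is exact.
\item Take blocker budget $C=H^2/4$ and $p=L/12$. A blocker's signal-good spending is at most $6p\,\theta_z$, since out-degree is at most two and the price is at most $3p$.
\item If $\theta_u\le L$, then $S_z\le L+6p=3L/2<HC$, which forces $\theta_z\ge H$.
\item If $\theta_u\ge H$ and $\theta_z>H/4$, then $\theta_z\theta_u>C$, which violates feasibility.
\end{itemize}
Bidding blockers only slightly above $b_v$ would not make their signal-good payments tiny, because a winning blocker may pay another blocker's bid. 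These payments are absorbed simply by taking $p$ small relative to $L$.

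Once this response is in place, the rest of your outline goes through. That includes the signal scales $a_0,a_1$, the budget test, your weaker but valid ambiguity interval $[HB,B/L]$ with $B_2>B_1/(HL)$, the choice of $K$, and the support count.
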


Fix $\delta\in(0,1)$ and set $H:=1-\delta\in(0,1)$. We reduce from the same bounded-degree \PureCircuit problem. Set
\[
 L:=\frac{H^3}{8},
\]
and decode a participation vector $\theta$ by
\begin{equation}
\label{eq:throttling-decoder}
 \chi(v):=\dec_{L,H}(\theta_v).
\end{equation}
The cubic low scale leaves enough budget slack to force the blocker response used below.

\subsection{Construction}
\label{subsec:throttling-construction}

For every circuit node $v$, create one \emph{variable bidder}, also denoted $v$. A source node $u$ with positive out-degree receives one bundle of blockers, and the same bundle is reused on every edge leaving $u$. This sharing is important for \NPURIFY: both outputs must observe exactly the same predecessor signal, because the purification argument excludes simultaneous indeterminacy by placing that common signal in two disjoint ambiguity intervals.

Set
\[
 \lambda:=\frac H4,
 \qquad
 p:=\frac L{12},
 \qquad
 C:=\frac{H^2}{4}.
\]
Choose the smallest positive integer $K$ satisfying
\begin{equation}
\label{eq:throttling-amplifier-size}
 4\left(\frac{1-H}{1-H/4}\right)^K<H^2L.
\end{equation}
Writing $r:=(1-H)/(1-H/4)\in[0,1)$, we have $K=1$ when $H=1$. For $H\in(0,1)$,
\[
 r=1-\frac{3H/4}{1-H/4}\le1-\frac{3H}{4}\le e^{-3H/4},
\]
so it is enough to take
\[
 K>\frac{4}{3H}\left(\log 32+5\log\frac1H\right).
\]
Hence the least feasible value satisfies
\begin{equation}
\label{eq:throttling-K-bound}
 K=O\left(H^{-1}(1+\log(1/H))\right).
\end{equation}
For fixed $\delta$, this is a constant-size amplifier; the explicit dependence is needed only to track the approach to the endpoint $\delta=1$.

For every source node $u$ of positive out-degree, we create blocker bidders $z_{u,1},\ldots,z_{u,K}$, each with budget $C$.

\begin{construction}[Blocker Control Good]
\label{constr:throttling-control-good}
For every such $u$ and every $q\in[K]$, we create a good $c_{u,q}$. Blocker $z_{u,q}$ bids $2$, variable bidder $u$ bids $1$, and all other bids are zero.
\end{construction}

For every interaction edge $u\to v$, create one signal good $s_{uv}$ whose bid ordering is blockers above target above source.

\begin{construction}[Signal Good]
\label{constr:throttling-signal-good}
On good $s_{uv}$, every blocker $z_{u,q}$ bids $3p$, the target $v$ bids $2p$, and the source $u$ bids $p$. All other bids are zero.
\end{construction}

The two types of goods play complementary roles. Control goods make each blocker react to the source state; the signal good then exposes the source to the target only when every blocker is absent. To calibrate the gate budgets against the resulting low--high separation, define the signal scales
\begin{equation}
\label{eq:throttling-signal-scales}
 a_0:=pL(1-H)^K,
 \qquad
 a_1:=pH(1-H/4)^K,
\end{equation}
and set
\begin{equation}
\label{eq:throttling-gate-budgets}
 B_1:=\frac{L^2a_1}{2},
 \qquad
 B_2:=\frac{3La_1}{4}.
\end{equation}
We assign budget $B_1$ to every \NOT and \NOR output. For an \NPURIFY gate, we assign its two outputs budgets $B_1$ and $B_2$, respectively. Every blocker keeps budget $C$, and all unspecified bids are zero. This completes the throttling construction.

\subsection{Proof of Correctness}
\label{subsec:throttling-correctness}

The proof again follows the common template. First, the control goods force opposite blocker responses in the low and high source states, and the product of blocker-absence probabilities turns this response into a separated signal. Second, an exact identity for variable-bidder spending converts the signal gap into budget forcing and an ambiguity interval. Finally, we verify the gate inequalities and invoke the common gate lemma to decode the source circuit.

\paragraph{Signal behavior.}
The control good is designed to make a blocker move in the opposite direction from its source. When the source is low, the blocker remains far enough below budget that complementarity forces high participation. When the source is high, any blocker participation above $\lambda$ would already violate budget feasibility.

\begin{lemma}[Blocker Response]
\label{lem:throttling-blocker-response}
In every $\delta$-approximate throttling equilibrium, for every node $u$ of positive out-degree and every $q\in[K]$,
\begin{align*}
 \theta_u\le L&\implies\theta_{z_{u,q}}\ge H,\\
 \theta_u\ge H&\implies\theta_{z_{u,q}}\le\lambda.
\end{align*}
\end{lemma}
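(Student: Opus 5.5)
The plan is to compute blocker $z:=z_{u,q}$'s expected spending exactly, then apply the two approximate-equilibrium conditions \eqref{eq:throttling-budget-feasibility} and \eqref{eq:throttling-complementarity}. First I identify the goods on which $z$ bids positively: its own control good $c_{u,q}$ and the signal goods $s_{uv}$ for the edges $u\to v$ leaving $u$. On each signal good $z$ bids $3p$ and ties with the other blockers of the bundle. I will assume the fixed priority order ranks the blockers in some order above $v$ and $u$. By \eqref{eq:throttling-payment}, $z$ pays at most the largest bid ranked below it on that good, namely at most $3p$. Since $u$ has out-degree at most two, the total signal-good spending of $z$ is at most $6p=L/2$, and it is nonnegative. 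On the control good $c_{u,q}$ the only bidders are $z$ (bid $2$) and $u$ (bid $1$), so \eqref{eq:throttling-payment} gives exactly $q_{z,c_{u,q}}=\theta_z\theta_u$. Hence
\[
 \theta_z\theta_u\le S_z(\theta)\le \theta_z\theta_u+\frac L2 .
\]

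For the first implication, suppose $\theta_u\le L$. Then $S_z(\theta)\le L+L/2=3L/2=3H^3/16$. Compare this with $(1-\delta)C=H\cdot H^2/4=H^3/4$. Since $3H^3/16<H^3/4$, condition \eqref{eq:throttling-complementarity} forces $\theta_z\ge1-\delta=H$.

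For the second implication, suppose $\theta_u\ge H$. Budget feasibility gives $\theta_z H\le\theta_z\theta_u\le S_z(\theta)\le C=H^2/4$, and therefore $\theta_z\le H/4=\lambda$.

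The only delicate point is the signal-good leakage for blockers. Blockers tie with one another on $s_{uv}$, so the bound has to rely on the general formula \eqref{eq:throttling-payment} over the strict order $\succ_j$. Every $\ell$ ranked below a blocker bids at most $3p$, and the product of probabilities is at most $1$, so each signal good contributes at most $3p$. I will also check that $p=L/12$ is exactly what makes $6p=L/2$ fit under the slack $H^3/4-L$; this holds with room to spare.
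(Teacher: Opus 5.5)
Your proof is correct and follows the paper's argument essentially step for step: the exact control payment $\theta_z\theta_u$, the $6p$ cap from at most two signal goods, complementarity triggered by $3L/2=3H^3/16<H^3/4=HC$, and budget feasibility giving $\theta_z\le C/H=\lambda$. One small tightening: the per-good bound of $3p$ holds because the summands in \eqref{eq:throttling-payment} correspond to disjoint runner-up events, so their probabilities sum to at most $\theta_z$; the fact that each product is at most $1$ would not suffice on its own. Your earlier realization-wise remark, that $z$ never pays more than $3p$ on a signal good, already supplies this.
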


\begin{proof}
Fix one blocker and write $z:=z_{u,q}$. On the control good $c_{u,q}$, bidder $z$ is the highest bidder and pays $1$ exactly when both $z$ and $u$ participate. Her expected control payment is therefore $\theta_z\theta_u$.

By \Cref{thm:pure-circuit}, node $u$ has out-degree at most two, so $z$ bids on at most two signal goods. On each such good, her payment is at most $3p$ whenever she participates. Hence
\[
 \theta_z\theta_u\le S_z(\theta)\le\theta_z(\theta_u+6p).
\]
If $\theta_u\le L$, then
\[
 S_z(\theta)\le L+6p=\frac{3L}{2}=\frac{3H^3}{16}<\frac{H^3}{4}=HC.
\]
Condition~\eqref{eq:throttling-complementarity} therefore forces $\theta_z\ge H$.

If instead $\theta_u\ge H$ and $\theta_z>\lambda$, then the control payment alone satisfies
\[
 \theta_z\theta_u>\lambda H=\frac{H^2}{4}=C,
\]
contradicting budget feasibility. Thus $\theta_z\le\lambda$.
\end{proof}

For a source $u$ with positive out-degree, we define the payment coefficient shared by all edges leaving $u$ as
\begin{equation}
\label{eq:throttling-signal-coefficient}
 y_u:=p\theta_u\prod_{q=1}^K(1-\theta_{z_{u,q}}).
\end{equation}
Conditional on a target $v$ participating, $y_u$ is exactly her expected payment on $s_{uv}$: the target wins only when all blockers are absent, and conditional on that event she pays $p$ precisely when the source participates. Therefore
\begin{equation}
\label{eq:throttling-exact-signal-payment}
 q_{v,s_{uv}}(\theta)=\theta_v y_u.
\end{equation}
The source never pays on a signal good, because whenever she wins all higher bidders are absent and the second price is zero; she also never pays on a control good.

The signal coefficient multiplies the absence probabilities of all blockers, so these individual response bounds compound across the bundle. The definitions of $a_0$ and $a_1$ in \Cref{eq:throttling-signal-scales} are exactly the resulting worst-case low and high signal levels.

\begin{lemma}[Amplified Signal Gap]
\label{lem:throttling-signal-gap}
Every $\delta$-approximate throttling equilibrium satisfies
\begin{align}
 \theta_u\le L&\implies y_u\le a_0,
 \label{eq:throttling-low-signal}\\
 \theta_u\ge H&\implies y_u\ge a_1.
 \label{eq:throttling-high-signal}
\end{align}
Moreover, the choice of $K$ implies
\begin{equation}
\label{eq:throttling-master-inequality}
 4a_0<HL^2a_1.
\end{equation}
\end{lemma}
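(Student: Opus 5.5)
The proof is a direct substitution of the blocker bounds of \Cref{lem:throttling-blocker-response} into the definition \eqref{eq:throttling-signal-coefficient} of $y_u$, followed by a short algebraic check that the master inequality is a rewriting of \eqref{eq:throttling-amplifier-size}.

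For \eqref{eq:throttling-low-signal}, assume $\theta_u\le L$. By \Cref{lem:throttling-blocker-response}, every blocker satisfies $\theta_{z_{u,q}}\ge H$, so each factor obeys $1-\theta_{z_{u,q}}\le 1-H$. All factors are nonnegative, so the product is at most $(1-H)^K$. Together with $\theta_u\le L$ this gives
\[
 y_u\le pL(1-H)^K=a_0.
\]

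For \eqref{eq:throttling-high-signal}, assume $\theta_u\ge H$. The same lemma gives $\theta_{z_{u,q}}\le\lambda=H/4$, so each factor is at least $1-H/4>0$. Together with $\theta_u\ge H$ this gives
\[
 y_u\ge pH(1-H/4)^K=a_1.
\]
The only care needed in both directions is monotonicity: all quantities are nonnegative, so we may multiply the bounds factor by factor.

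For \eqref{eq:throttling-master-inequality}, substitute the definitions from \eqref{eq:throttling-signal-scales}. The left side is $4pL(1-H)^K$ and the right side is $HL^2\cdot pH(1-H/4)^K=pH^2L^2(1-H/4)^K$. Since $p>0$, $L>0$ and $(1-H/4)^K>0$, divide both sides by $pL(1-H/4)^K$. The inequality becomes
\[
 4\left(\frac{1-H}{1-H/4}\right)^K<H^2L,
\]
which is exactly the defining condition \eqref{eq:throttling-amplifier-size} on $K$. There is no real obstacle; the main point is to confirm that the exponent-$K$ ratio $r^K$ is precisely what the choice of $K$ controls, so nothing beyond \eqref{eq:throttling-amplifier-size} is needed.
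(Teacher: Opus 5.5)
Your proposal is correct and follows the paper's proof: you substitute the blocker bounds from \Cref{lem:throttling-blocker-response} into the definition of $y_u$, and you obtain the master inequality from \eqref{eq:throttling-amplifier-size}. The paper multiplies \eqref{eq:throttling-amplifier-size} by the positive quantity $pL(1-H/4)^K$ where you divide by it, which is the same equivalence; the paper also adds the side remark $a_0<a_1$, which the statement itself does not require.
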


\begin{proof}
If $\theta_u\le L$, then \Cref{lem:throttling-blocker-response} gives $\theta_{z_{u,q}}\ge H$ for every $q$. Substituting these bounds into \Cref{eq:throttling-signal-coefficient} gives $y_u\le pL(1-H)^K=a_0$. If $\theta_u\ge H$, every blocker probability is at most $H/4$, and therefore $y_u\ge pH(1-H/4)^K=a_1$.

Finally, multiplying \Cref{eq:throttling-amplifier-size} by $pL(1-H/4)^K$ yields $4pL(1-H)^K<pH^2L^2(1-H/4)^K$.
The two sides are $4a_0$ and $HL^2a_1$, respectively, proving \Cref{eq:throttling-master-inequality}. In particular $a_0<a_1$, because $HL^2<1$.
\end{proof}

\paragraph{Budget forcing and ambiguity.}
For node $v$, let
\[
 A_v:=\sum_{u\in N^-(v)}y_u.
\]
Because a variable bidder pays only on incoming signal goods, her spending is exactly her participation probability times this aggregate signal.

\begin{corollary}[Exact Variable Spending]
\label{cor:throttling-exact-variable-spending}
Every variable bidder $v$ satisfies
\begin{equation}
\label{eq:throttling-variable-spending}
 S_v(\theta)=\theta_v A_v.
\end{equation}
\end{corollary}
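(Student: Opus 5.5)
We will prove \Cref{eq:throttling-variable-spending} by listing every good on which the variable bidder $v$ places a positive bid, computing $q_{vj}(\theta)$ from \Cref{eq:throttling-payment} on each such good, and summing. In the construction, $v$ bids on exactly three kinds of goods. First, the control goods $c_{v,q}$, which exist when $v$ has positive out-degree. Second, the signal goods $s_{vw}$ for outgoing edges $v\to w$, on which $v$ is the source. Third, the signal goods $s_{uv}$ for incoming edges $u\to v$, on which $v$ is the target. Because every node is the output of exactly one gate, every incoming edge corresponds to some $u\in N^-(v)$. Moreover, gate inputs and outputs are distinct, so $u\ne v$ and the three kinds of goods are genuinely different.

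On a control good $c_{v,q}$, the only positive bidders are $z_{v,q}$, who bids $2$, and $v$, who bids $1$. Hence no bidder ranks below $v$ with a positive bid. In \Cref{eq:throttling-payment}, every term for $i=v$ then has $b_{\ell j}=0$, so $q_{v,c_{v,q}}(\theta)=0$. On an outgoing signal good $s_{vw}$, $v$ bids $p$ and is the lowest positive bidder. The same argument shows that every term in her payment vanishes; bidders with zero bid may rank below her, but they contribute $b_{\ell j}=0$. This formalizes the earlier remark that a source never pays on signal or control goods.

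On an incoming signal good $s_{uv}$, the ranking is all blockers $z_{u,q}$ (bid $3p$), then $v$ (bid $2p$), then $u$ (bid $p$), then zero bidders. In \Cref{eq:throttling-payment} with $i=v$, the only runner-up $\ell$ with $v\succ_j\ell$ and $b_{\ell j}>0$ is $u$. The product ranges over all $k\ne v$ ranked above $u$, namely the $K$ blockers. This gives $q_{v,s_{uv}}(\theta)=p\theta_v\theta_u\prod_q(1-\theta_{z_{u,q}})=\theta_v y_u$, which is \Cref{eq:throttling-exact-signal-payment}. Summing over $u\in N^-(v)$ yields $S_v(\theta)=\theta_v A_v$.

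The main obstacle is the handling of ties. Blockers from the same bundle tie at $3p$, and zero bids tie with one another. We must check that the fixed tie-breaking order $\succ_j$ does not change the calculation. It does not: blockers always rank above $v$ whatever their internal order, so their relative ranking leaves the product unchanged. Zero bidders contribute nothing, whether they appear as runners-up or are placed below $u$ in the order. No approximate-equilibrium condition is used, so the identity holds for every $\theta$.
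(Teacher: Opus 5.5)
Your proposal is correct and follows the same route as the paper. Both decompose $S_v(\theta)$ over the goods on which $v$ bids, use that $v$ pays nothing on control and outgoing signal goods because she is the lowest positive bidder there, and obtain $\theta_v y_u$ on each incoming signal good $s_{uv}$; the only difference is that you re-derive these per-good payments (including the tie check) directly from the payment formula, where the paper cites its earlier observations.
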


\begin{proof}
For every incoming edge $u\to v$, \Cref{eq:throttling-exact-signal-payment} gives payment $\theta_v y_u$. As observed above, a variable bidder pays zero on all outgoing signal and control goods. Summing the incoming payments proves the identity.
\end{proof}

This exact identity is the throttling analogue of the pacing budget test. A small aggregate signal leaves enough slack to activate complementarity and force high participation, whereas a large aggregate signal makes any participation above $L$ violate the budget.

\begin{lemma}[Throttling Budget Forcing]
\label{lem:throttling-budget-forcing}
For any variable bidder $v$ with budget $B$, every $\delta$-approximate throttling equilibrium satisfies
\begin{align*}
 A_v<HB &\implies \theta_v\ge H,\\
 A_v>B/L &\implies \theta_v\le L.
\end{align*}
\end{lemma}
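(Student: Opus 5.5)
The plan is to use the exact spending identity of \Cref{cor:throttling-exact-variable-spending}, $S_v(\theta)=\theta_v A_v$, together with the two conditions of \Cref{def:approx-throttling}. Both implications then reduce to one-line comparisons. The only care needed is in how the trivial bound $\theta_v\le1$ is used and in which direction the complementarity condition is applied.

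For the first implication, I assume $A_v<HB$. Since $\theta_v\in[0,1]$, the identity gives
\[
 S_v(\theta)=\theta_vA_v\le A_v<HB=(1-\delta)B.
\]
Here the middle step uses $A_v\ge0$, which holds because every $y_u$ is a product of nonnegative factors. The premise of \eqref{eq:throttling-complementarity} is therefore met, and it yields $\theta_v\ge1-\delta=H$, as required.

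For the second implication, I argue by contradiction. Suppose $A_v>B/L$ and $\theta_v>L$. The identity then gives
\[
 S_v(\theta)=\theta_vA_v>L\cdot\frac BL=B.
\]
The strict inequality is valid because both factors are positive and each strictly exceeds its bound. This contradicts the budget feasibility condition \eqref{eq:throttling-budget-feasibility}, so $\theta_v\le L$.

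There is no real obstacle here. All of the modeling work is already done by \Cref{cor:throttling-exact-variable-spending}, which relies on the facts that the variable bidder pays nothing on outgoing signal goods or control goods and pays exactly $\theta_v y_u$ on each incoming signal good. The lemma is simply its translation into the threshold form \eqref{eq:abstract-budget-test}, with $\tau_1(B)=HB$ and $\tau_0(B)=B/L$. The one point to check is that the first implication uses $\theta_v\le1$ to pass from $A_v$ to $S_v(\theta)$. This is what makes the budget test independent of the output's own state.
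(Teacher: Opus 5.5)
Your proposal is correct and follows the paper's proof essentially line for line: you use $S_v(\theta)=\theta_vA_v\le A_v<HB$ to trigger complementarity, and $\theta_vA_v>LA_v>B$ to contradict budget feasibility. Your explicit remark that $A_v\ge0$ is needed for the first step is a detail the paper leaves implicit.
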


\begin{proof}
Since $\theta_v\le1$, \Cref{eq:throttling-variable-spending} gives $S_v(\theta)\le A_v$. Thus $A_v<HB$ satisfies the premise of \eqref{eq:throttling-complementarity}, forcing $\theta_v\ge H$.

Conversely, if $A_v>B/L$ and $\theta_v>L$, then
\[
 S_v(\theta)=\theta_vA_v>L A_v>B,
\]
contradicting budget feasibility. Hence $\theta_v\le L$.
\end{proof}

For unary outputs, the same exact spending identity determines the full range of predecessor signals compatible with an indeterminate output.

\begin{lemma}[Throttling Ambiguity Interval]
\label{lem:throttling-ambiguity-interval}
Consider a unary-gate output $v$ with budget $B$ and predecessor $u$. If $\chi(v)=\bot$, then
\[
 y_u\in J(B):=(B,B/L).
\]
\end{lemma}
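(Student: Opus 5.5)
The plan is to argue by contrapositive, in the same way as \Cref{lem:pacing-ambiguity-interval}, with \Cref{cor:throttling-exact-variable-spending} doing most of the work. Since $v$ is the output of a unary gate with the single predecessor $u$, we have $N^-(v)=\{u\}$, so $A_v=y_u$ and \Cref{eq:throttling-variable-spending} becomes $S_v(\theta)=\theta_v y_u$. Assume $\chi(v)=\bot$. By the decoder \Cref{eq:throttling-decoder}, this means $L<\theta_v<H$. I will obtain one endpoint of $J(B)$ from each of the two decoder bounds.

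For the upper endpoint I will use budget feasibility \eqref{eq:throttling-budget-feasibility}, which gives $\theta_v y_u\le B$. If $y_u\ge B/L$, then $\theta_v>L$ yields $\theta_v y_u>L\cdot B/L=B$ whenever $y_u>0$, which contradicts feasibility. The case $y_u=0$ cannot occur together with $y_u\ge B/L$, because $B>0$. This argument handles the boundary value $y_u=B/L$ as well, so it proves the strict inequality $y_u<B/L$. Note that it is slightly sharper than the second implication of \Cref{lem:throttling-budget-forcing}, which only treats $A_v>B/L$.

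For the lower endpoint I will use complementarity \eqref{eq:throttling-complementarity}. Since $\theta_v<H=1-\delta$, its contrapositive gives $S_v(\theta)\ge(1-\delta)B=HB$, that is, $\theta_v y_u\ge HB$. Combining this with $\theta_v<H$ gives $H y_u>\theta_v y_u\ge HB$. Here $y_u>0$ holds, because the product $\theta_v y_u$ is at least $HB>0$. This gives $y_u>B$ strictly. Observe that this is a stronger conclusion than the cruder $y_u\ge HB$ supplied by the first implication of \Cref{lem:throttling-budget-forcing}.

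The only delicate point is that the lower endpoint requires combining both conditions on $\theta_v$: the strict upper bound $\theta_v<H$ has to be used twice, once to trigger complementarity and once to divide. With that done, both strict inequalities are established and $y_u\in(B,B/L)=J(B)$.
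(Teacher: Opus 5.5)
Your proof is correct and is essentially the paper's argument. With $N^-(v)=\{u\}$, the exact spending identity gives $S_v(\theta)=\theta_v y_u$; the contrapositive of complementarity together with $\theta_v<H$ then yields $y_u>B$, and budget feasibility together with $\theta_v>L$ yields $y_u<B/L$ (the paper writes these as $y\ge HB/\theta>B$ and $y\le B/\theta<B/L$, leaving the positivity of $y_u$ implicit where you check it explicitly).
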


\begin{proof}
Write $\theta:=\theta_v$ and $y:=y_u$. By \Cref{eq:throttling-decoder}, the condition $\chi(v)=\bot$ means $L<\theta<H$. Since $\theta<H$, the contrapositive of \eqref{eq:throttling-complementarity} gives $\theta y\ge HB$, and therefore $y\ge HB/\theta>B$.
Budget feasibility gives $\theta y\le B$; because $\theta>L$, we also have $y\le B/\theta< B/L$.
Thus $y\in(B,B/L)$.
\end{proof}

\paragraph{Gate verification and decoding.}
We instantiate the abstract framework with
\[
 z_u:=\theta_u,
 \qquad
 s_u:=y_u.
\]
By \Cref{lem:throttling-signal-gap}, decoded zeros and ones send signals bounded by $a_0$ and $a_1$. By \Cref{lem:throttling-budget-forcing},
\[
 \tau_1(B)=HB,
 \qquad
 \tau_0(B)=B/L,
\]
while \Cref{lem:throttling-ambiguity-interval} gives $J(B)=(B,B/L)$.

For the common budget $B_1$, the master inequality gives
\[
 2a_0<HB_1,
 \qquad
 \frac{B_1}{L}=\frac{La_1}{2}<a_1.
\]
Thus $B_1$ satisfies the \NOT/\NOR conditions of \Cref{lem:abstract-gate-simulation}. For \NPURIFY, we also have $a_0<HB_1<HB_2$ and
\[
 \frac{B_1}{L}=\frac{La_1}{2}<a_1,
 \qquad
 \frac{B_2}{L}=\frac{3a_1}{4}<a_1.
\]
Finally,
\[
 \frac{B_1}{L}=\frac{La_1}{2}<\frac{3La_1}{4}=B_2,
\]
so $J(B_1)\cap J(B_2)=\varnothing$. Hence the two gate budgets assigned in the construction satisfy every hypothesis of the abstract gate lemma.

We can now decode the market equilibrium and finish the reduction.

\begin{proof}[Proof of \Cref{thm:throttling-hardness}]
Take the market constructed from a bounded-degree \PureCircuit instance $\mathcal C$, and let $\theta$ be any $\delta$-approximate throttling equilibrium. We decode every circuit node using \Cref{eq:throttling-decoder}. By \Cref{lem:throttling-signal-gap}, a decoded zero sends signal at most $a_0$, while a decoded one sends signal at least $a_1$. \Cref{lem:throttling-budget-forcing,lem:throttling-ambiguity-interval} give the abstract thresholds $\tau_1(B)=HB$, $\tau_0(B)=B/L$, and $J(B)=(B,B/L)$. The parameter verification above therefore allows us to apply \Cref{lem:abstract-gate-simulation} to every gate. For \NPURIFY, the two outputs see the same coefficient $y_u$ because the construction reuses the source's blocker bundle. We conclude that the decoded assignment $\chi$ satisfies the entire source circuit, so every approximate throttling equilibrium yields a valid solution of $\mathcal C$ in polynomial time.

A control good has two positive bids and a signal good has $K+2$, hence $\goodsupport\le K+2$. A variable bidder bids on the $K$ control goods of her blocker bundle, when she has one, and on at most three incident signal goods; a blocker bids on one control good and at most two signal goods. Therefore $\biddersupport\le K+3$. Since the interaction graph has $O(|V|)$ edges, the market contains $O(|V|K)$ bidders and goods. Combining these facts with \Cref{eq:throttling-K-bound} gives
\[
 \goodsupport,\ \biddersupport
 =O\left(H^{-1}(1+\log(1/H))\right).
\]
Together with \Cref{thm:pure-circuit}, this proves the theorem.
\end{proof}

As in the pacing case, the explicit dependence on $H=1-\delta$ shows that the reduction remains polynomial even when the endpoint gap is inverse-polynomial.

\begin{corollary}[Inverse-Polynomial Throttling Gap]
\label{cor:throttling-inverse-polynomial-gap}
For every fixed positive integer $c$, the reduction remains polynomial when, on an $N$-node source instance with $N\ge2$, $1-\delta=N^{-c}$. Here $K=O(N^c\log N)$, and the rational powers appearing in the signal scales and budgets have polynomial bit length. Consequently, in the uniform search problem in which the parameter $\delta\in(0,1)$ is part of the input, $\PPAD$-hardness persists even when $1-\delta$ is inverse-polynomial in the total binary input length.
\end{corollary}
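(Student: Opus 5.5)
The plan is to reuse the construction of \Cref{thm:throttling-hardness} verbatim with $H=N^{-c}$ and to check two things: that the number of bidders and goods stays polynomial, and that every numerical parameter (bids $p$, $2p$, $3p$, budgets $C,B_1,B_2$) is a rational of polynomial bit length. Correctness needs no new argument. The proofs of \Cref{lem:throttling-blocker-response,lem:throttling-signal-gap,lem:throttling-budget-forcing,lem:throttling-ambiguity-interval} never use that $H$ is constant; they only use \eqref{eq:throttling-amplifier-size} and the identities among $\lambda,p,C,L,a_0,a_1,B_1,B_2$. Hence \Cref{lem:abstract-gate-simulation} applies for any $H\in(0,1)$.

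First I bound $K$. By the sufficient condition derived before \eqref{eq:throttling-K-bound}, the least feasible $K$ satisfies
\[
 K\le\frac{4}{3H}\left(\log 32+5\log\frac1H\right)+1 .
\]
With $H=N^{-c}$ and $N\ge2$ this gives $K=O(N^c(1+c\log N))=O(N^c\log N)$. The condition $N\ge2$ ensures $\log N>0$, so the constant term is absorbed. The market then has $O(|V|K)$ bidders and goods, which is polynomial.

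Next I bound bit lengths. The quantities $H=N^{-c}$, $L=H^3/8$, $p=L/12$, $\lambda$ and $C$ are rationals with $O(c\log N)$ bits. The signal scales $a_0=pL(1-H)^K$ and $a_1=pH(1-H/4)^K$ are $K$-th powers of rationals whose numerator and denominator have $O(c\log N)$ bits. Such a power has $O(Kc\log N)=O(N^c\log^2N)$ bits, which is polynomial. The budgets $B_1,B_2$ are fixed rational multiples of $L^2a_1$ and $La_1$, so they are also polynomial. $K$ itself can be found in polynomial time by testing \eqref{eq:throttling-amplifier-size} with exact rational arithmetic for successive integers up to the bound above. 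Alternatively, one can take $K$ to be that explicit bound rounded up, after verifying it is an integer satisfying the strict inequality; this changes $K$ only by a constant factor and leaves the support degrees as stated.

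Finally, for the uniform problem, I take $\delta$ as part of the input. The total input length $M$ is polynomial in $N$, and $1-\delta=N^{-c}\ge M^{-c'}$ for a suitable constant $c'$, so inverse-polynomial in $N$ is inverse-polynomial in $M$. Extracting a \PureCircuit solution from any $\delta$-approximate equilibrium is exactly as in the proof of \Cref{thm:throttling-hardness}.

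I expect the main obstacle to be the bit-length accounting for $(1-H/4)^K$ and $(1-H)^K$. One must confirm that the exponent $K$ is polynomial and not merely that $K$ is small. The bound $O(N^c\log N)$ settles this.
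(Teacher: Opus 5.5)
Your proposal is correct and follows essentially the same route as the paper. You substitute $H=N^{-c}$ into the parameter-generic construction, use the sufficient bound behind \eqref{eq:throttling-K-bound} to get $K=O(N^c\log N)$, and observe that the $K$-th powers in $a_0,a_1,B_1,B_2$ have $O(Kc\log N)$ bits; the paper's independent-throttling appendix does the same accounting explicitly, including a rational closed-form choice of $K$.

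One small precision: $N^{-c}\ge M^{-c}$ follows from $M\ge N$ (the market has at least $N$ bidders), not from $M\le\mathrm{poly}(N)$. The polynomial upper bound is what shows the gap is also polynomially small in $M$, so you need both bounds on $M$, as the paper's appendix states.
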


\section{Discussion}
\label{sec:discussion}

The main conclusion is the same for pacing and throttling: approximation does not create a nontrivial tractable regime before the complementarity condition becomes vacuous. Pacing is $\PPAD$-hard for every fixed $\gamma<1$, and throttling is $\PPAD$-hard for every fixed $0<\delta<1$, while at parameter $1$ the all-zero solution is immediate.
The common reduction framework isolates these mechanism-specific effects from the gate logic, making the endpoint phenomenon visible at the level of signal transmission rather than at the level of individual circuit gadgets. We believe it may be of independent interest.

Several questions remain. First, exact second-price throttling requires separate treatment because exact equilibria can be irrational~\cite{CKK21T}. The construction above remains algebraically valid at $\delta=0$, but we do not formulate this endpoint as a standard finite-output $\PPAD$ search problem. A natural question is to determine the exact real-algebraic complexity of the problem, for example whether it is $\mathsf{FIXP}$-complete under a standard algebraic representation.

Second, it is unclear how sparse endpoint-tight hard instances can be. For throttling, the case in which every good has at most two positive bids admits a polynomial-time approximation algorithm for every fixed positive error~\cite{CKK21T}, while the known hardness construction uses at most three positive bids per good. Our signal goods instead use $K+2$ positive bidders. More broadly, one can ask whether endpoint-tight hardness is possible with constant support independent of the distance to the endpoint, with uniformly bounded numerical ratios, or even over a fixed finite alphabet of bids, values, and budgets.

Third, we also wonder whether the uniform versions of our reductions could achieve inverse-exponential endpoint gaps, instead of the current polynomially small ones.

\paragraph{AI Disclosure.}
The author used GPT-5.6 Sol (with Extra High model) to assist with the construction and analysis of the reductions, and Gemini 3.6 Flash to assist with presentation. The correctness of all proofs was thoroughly verified by the author.

\printbibliography

\clearpage

\appendix
\section{Independent Relaxations for Pacing}
\label{app:independent-pacing}

The main pacing theorem keeps winner selection exact and relaxes only the complementarity trigger. This appendix separates three approximation margins: winner selection, the spending threshold that activates complementarity, and the multiplier required once complementarity is activated. Hardness persists for every fixed triple below $1$. The only genuinely new issue is approximate winner eligibility: even when the target has the highest scaled bid, the source may remain eligible for allocation and therefore incur an additional payment. Widening the spacing between consecutive wire levels ensures that this extra leakage can occur at at most one level.

\begin{definition}[Independent Pacing Approximation~\cite{CL25}]
\label{def:independent-pacing}
Let $\sigma,\gamma,\tau\in[0,1)$. A pair $(\alpha,x)$ is a \emph{$(\sigma,\gamma,\tau)$-approximate pacing equilibrium} if:
\begin{enumerate}[label=(\alph*)]
\item $x_{ij}>0$ implies $\alpha_i v_{ij}\ge(1-\sigma)h_j(\alpha)$;
\item $h_j(\alpha)>0$ implies $\sum_i x_{ij}=1$;
\item $\spend_i(\alpha,x)\le B_i$ for every bidder $i$; and
\item $\spend_i(\alpha,x)<(1-\gamma)B_i$ implies $\alpha_i\ge1-\tau$.
\end{enumerate}
\end{definition}

The three parameters have distinct roles: $\sigma$ relaxes winner selection, $\gamma$ relaxes the spending trigger, and $\tau$ relaxes the multiplier required after the trigger. Prices remain the profile-wide second-highest scaled bids from the preliminaries, even when the allocated bidder is only approximately highest; in particular, an allocated bidder who is second-highest may pay her own scaled bid. We call bidder $i$ \emph{approximately eligible} for good $j$ when $\alpha_i v_{ij}\ge(1-\sigma)h_j(\alpha)$. This enlarged eligibility set is precisely the source of the extra leakage handled below.

Setting $\sigma=\tau=0$ recovers \Cref{def:approx-pacing}. The two-parameter approximation of Chen, Kroer, and Kumar is obtained by coupling the last two relaxations: their parameters $(\delta,\rho)$ correspond to $(\sigma,\gamma,\tau)=(\delta,\rho,\rho)$~\cite{CKK24}.

\begin{theorem}[Independent-Relaxation Pacing $\PPAD$-Hardness]
\label{thm:independent-pacing}
For every fixed $\sigma,\gamma,\tau\in[0,1)$, computing a pair $(\alpha,x)$ satisfying \Cref{def:independent-pacing} is $\PPAD$-hard, even when
\[
 \goodsupport(\mathcal I)=2,
 \qquad
 \biddersupport(\mathcal I)
 =O\left(((1-\gamma)^2(1-\tau))^{-1}\right).
\]
\end{theorem}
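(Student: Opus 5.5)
The plan is to rerun the pacing reduction of \Cref{sec:pacing-hardness} with three changes. The geometric wire gets a wider level spacing that depends on $\sigma$. The decoding and budget scales are multiplied by $1-\tau$. The wire length $K$ is enlarged so that the total leakage stays below the shrunken complementarity window. Write $\beta:=1-\gamma$, $\mu:=1-\tau$, and $\kappa:=1-\sigma$. I would decode with $\dec_{L,U}$ where $U:=\mu/2$ and $L:=c\,\beta^2\mu$ for a small absolute constant $c$. The wire uses thresholds $t_\ell:=T\rho^{\ell-K}$ with ratio $\rho:=4/\kappa$ and top level $T:=\kappa L/2$, and length $K:=\Theta((\beta^2\mu)^{-1})$, so that $\eta:=O(1/K)$ is a small constant fraction of $\beta^2\mu$. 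The budgets are $B_G:=\mu/8$ and, for \NPURIFY, $B_1:=\beta\mu/16$ and $B_2:=\mu/8$. This is the same pattern as before, scaled by $\mu$. Since $\goodsupport=2$ and $\biddersupport\le 3K$, the claimed sparsity follows. Because $\rho$ is a constant for fixed $\sigma$, the bit lengths stay polynomial.

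\emph{Step 1 (wire lemma with approximate eligibility).} I would reprove \Cref{lem:pacing-geometric-wire} under \Cref{def:independent-pacing}(a)--(b). Each wire good has only two bidders, so its price is the smaller of the two scaled bids, hence at most $a/K$. The target therefore still pays at most $a$ in total, and \Cref{eq:pacing-target-payment-upper} holds with no change. For exact forward transmission, suppose $b>L$. Then on every level
\[
 \frac{b/(Kt_\ell)}{a/K}\ \ge\ \frac{L}{T}\ =\ \frac{2}{\kappa}\ >\ \frac1\kappa,
\]
so the source is not even approximately eligible. Full allocation then gives every good to the target at price $a/K$, which recovers \Cref{eq:pacing-exact-forward-payment}.

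For the reverse leakage, set $z:=b/a$ and split the levels on which the source is approximately eligible into two kinds:
\begin{itemize}[leftmargin=2em]
\item levels with $t_\ell\ge z$, where the source is genuinely highest;
\item levels with $t_\ell\in[\kappa z,z)$, where the source is eligible but not highest.
\end{itemize}
The first kind forms a suffix on which the prices $b/(Kt_\ell)$ decay by the factor $\rho\ge4$, so exactly as before their total is below $4/(3K)$. For the second kind, the interval $[\kappa z,z)$ has ratio $1/\kappa<\rho$, so it contains at most one threshold. That single level costs at most $a/K\le1/K$. Hence each wire leaks less than $7/(3K)$ backward. With out-degree at most two, a bidder leaks at most $14/(3K)$ in total, and I would set $\eta:=5/K$.

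\emph{Step 2 (budget forcing and ambiguity).} The analogue of \Cref{lem:pacing-budget-forcing} becomes the following. If $A_v+\eta\le\beta B_v$, then condition (d) forces $\alpha_{b_v}\ge\mu\ge U$, so $\chi(v)=1$. If $A_v>B_v$, then any $\alpha_{b_v}>L$ makes exact forward spending exceed the budget, so $\alpha_{b_v}\le L$. The thresholds are therefore $\tau_1(B)=\beta B-\eta$ and $\tau_0(B)=B$, and the ambiguity interval is again $J(B)=(\beta B-\eta,B]$. Note that condition (d) now only guarantees $\alpha\ge\mu$ rather than $\alpha=1$. This is exactly why $U$, and with it every budget, must be scaled by $\mu$.

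\emph{Step 3 (numerics and decoding).} With $a_0:=L$ and $a_1:=U$, I would check the hypotheses of \Cref{lem:abstract-gate-simulation}:
\begin{itemize}[leftmargin=2em]
\item $2L+\eta<\beta\mu/8$, which holds because both $L$ and $\eta$ are $O(\beta^2\mu)$ with small constants;
\item $\tau_0(B)=B<\mu/2$ for every budget used;
\item $J(B_1)\cap J(B_2)=\varnothing$, which reduces to $\beta\mu/16<\beta\mu/8-\eta$ and holds since $\eta\ll\beta\mu$.
\end{itemize}
Decoding then proceeds exactly as in the proof of \Cref{thm:pacing-hardness}.

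I expect Step 1 to be the main obstacle. Prices are profile-wide second bids, while allocation only requires approximate eligibility. The subtle cases are ties, near-ties, and the possibility that the allocated source is the second-highest bidder and pays her own bid. The key point to secure is that the ratio $\rho>1/\kappa$ confines the non-highest eligible levels to a single level at every possible value of $z$. That is what keeps the total leakage at $O(1/K)$ uniformly, and so keeps it below the $\Theta(\beta^2\mu)$ window required by the budget test.
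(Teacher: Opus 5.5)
Your proposal is correct and follows the paper's proof essentially step for step: the same $\sigma$-dependent level ratio $4/(1-\sigma)$ with top threshold $(1-\sigma)L/2$, the same split of reverse leakage into a geometric suffix plus at most one approximately eligible level, the same $\Theta(\beta^2\mu)$ decoding and leakage scales, and the same budget-forcing/ambiguity-interval route into \Cref{lem:abstract-gate-simulation}. The differences are only in constants (e.g.\ your $\eta=5/K$ versus the paper's $6/K$, and your trigger $A_v+\eta\le\beta B_v$ versus the paper's unnecessarily halved $\tfrac{\beta}{2}B_v$); the one check you left implicit, $L+\eta<\beta B_1=\beta^2\mu/16$ for the \NPURIFY input condition, holds once your small constants are fixed (e.g.\ $c=1/512$ and $K\ge 8192/(\beta^2\mu)$).
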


Fix $\sigma,\gamma,\tau\in[0,1)$ and write
\[
 s:=1-\sigma,
 \qquad
 \beta:=1-\gamma,
 \qquad
 h:=1-\tau,
\]
so $s,\beta,h\in(0,1]$. We reduce from the same bounded-degree \PureCircuit problem as in the main text.

\subsection{Construction}
\label{subsec:independent-pacing-construction}

The construction keeps the topology of the main pacing reduction and changes only the decoding scale, the wire spacing, and the corresponding budgets. Set
\begin{equation}
\label{eq:independent-pacing-decoding-thresholds}
 U:=\frac h2,
 \qquad
 L:=\frac{\beta^2h}{512},
 \qquad
 T:=\frac{sL}{2},
\end{equation}
and decode node $v$ by
\begin{equation}
\label{eq:independent-pacing-decoder}
 \chi(v):=\dec_{L,U}(\alpha_{b_v}).
\end{equation}
For the wire, set
\begin{equation}
\label{eq:independent-pacing-wire-parameters}
 Q:=\frac4s,
 \qquad
 K:=\left\lceil\frac{8192}{\beta^2h}\right\rceil+1,
 \qquad
 \eta:=\frac6K.
\end{equation}
Then
\begin{equation}
\label{eq:independent-pacing-leakage-bound}
 \eta<\frac{3\beta^2h}{4096}.
\end{equation}

For every circuit node $v$, create one variable bidder $b_v$. For every interaction edge $u\to v$, create goods $g^{uv}_1,\ldots,g^{uv}_K$ and write $W_{uv}:=\{g^{uv}_1,\ldots,g^{uv}_K\}$ for the corresponding wire. At level $\ell\in[K]$, let
\begin{equation}
\label{eq:independent-pacing-wire-thresholds}
 t_\ell:=TQ^{\ell-K},
\end{equation}
and give only $b_u$ and $b_v$ positive values:
\begin{equation}
\label{eq:independent-pacing-wire-valuations}
 v_{b_u,g^{uv}_\ell}=\frac1K,
 \qquad
 v_{b_v,g^{uv}_\ell}=\frac{1}{Kt_\ell}.
\end{equation}
Because $Q=4/s>1/s$, consecutive thresholds are far enough apart that the approximate-eligibility window used in the proof contains at most one level.

Finally, assign gate budgets
\begin{equation}
\label{eq:independent-pacing-gate-budgets}
 B_G:=\frac h4,
 \qquad
 B_1:=\frac{\beta h}{32},
 \qquad
 B_2:=\frac h4.
\end{equation}
Every \NOT and \NOR output receives budget $B_G$, while the two outputs of an \NPURIFY gate receive $B_1$ and $B_2$, respectively. Every node is the output of exactly one gate, so this assigns one budget to each variable bidder. All unspecified valuations are zero. This completes the market construction.

\subsection{Proof of Correctness}
\label{subsec:independent-pacing-correctness}

The proof mirrors the main pacing argument. Wider spacing first controls the extra leakage created by approximate winner eligibility. The resulting wire bounds then yield budget forcing and an ambiguity interval, after which the same abstract gate lemma completes the decoding argument.

\paragraph{Signal behavior.}
Approximate winner eligibility affects only the reverse-payment bound. The target can be strictly highest while the source is still eligible to receive the good, but the choice $Q>1/s$ makes the corresponding eligibility window too narrow to contain more than one wire level.

\begin{lemma}[Robust Geometric Wire]
\label{lem:independent-pacing-geometric-wire}
For every multiplier vector $\alpha$ and every allocation $x$ satisfying \Cref{def:independent-pacing}(a)--(b), the wire $u\to v$ satisfies
\begin{align}
 \sum_{j\in W_{uv}}x_{b_v j}p_j(\alpha)&\le a,
 \label{eq:independent-pacing-target-payment-upper}\\
 \sum_{j\in W_{uv}}x_{b_u j}p_j(\alpha)&<\frac3K,
 \label{eq:independent-pacing-source-leakage}
\end{align}
where $a:=\alpha_{b_u}$. Moreover, if $b:=\alpha_{b_v}>L$, then $b_v$ is the unique approximately eligible bidder on every wire good, receives every good in full, and pays exactly $a$ on the wire.
\end{lemma}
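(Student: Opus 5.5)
The plan is to reuse the proof of \Cref{lem:pacing-geometric-wire}. The one new ingredient is that approximate eligibility lets the source be allocated on a level where the target is strictly highest. On a wire good $g^{uv}_\ell$ only $b_u$ and $b_v$ bid, and the bids are $a/K$ and $b/(Kt_\ell)$. Prices are the profile-wide second-highest scaled bids, so
\[
 p_{g^{uv}_\ell}(\alpha)=\min\{a/K,\ b/(Kt_\ell)\}.
\]
This formula does not depend on who is allocated. Summing gives \Cref{eq:independent-pacing-target-payment-upper} at once: every unit the target receives costs at most $a/K$, and there are $K$ levels.

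For the reverse bound \Cref{eq:independent-pacing-source-leakage}, the case $a=0$ is trivial because all prices vanish. So assume $a>0$ and set $z:=b/a$. By \Cref{def:independent-pacing}(a), the source can receive level $\ell$ only if $a/K\ge s\,b/(Kt_\ell)$, that is, only if $t_\ell\ge sz$. Let $k$ be the least index with $t_k\ge sz$; if none exists, the source pays nothing.
\begin{itemize}
\item At level $k$ itself, the source pays at most the price, which is at most $a/K$. This is the single level where approximate eligibility is really used.
\item For $\ell\ge k+1$, we have $t_\ell\ge Q^{\ell-k}t_k\ge Q^{\ell-k}sz=4Q^{\ell-k-1}z$. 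Hence the price is at most $b/(Kt_\ell)\le \frac{a}{4K}Q^{-(\ell-k-1)}$.
\end{itemize}
Since $Q^{-1}=s/4\le1/4$, the geometric tail is at most $\frac{a}{4K}\cdot\frac43=\frac{a}{3K}$. The total is therefore at most $\frac{4a}{3K}\le\frac{4}{3K}<\frac3K$, using $a\le1$. This is exactly where the spacing $Q=4/s$ matters: it pushes every level beyond $k$ into the regime where the target's bid exceeds the source's by a factor of at least $4$.

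For the forward statement, suppose $b>L$. Since $t_\ell\le T=sL/2$ and $a\le1$, the ratio of the source bid to the target bid is
\[
 \frac{a t_\ell}{b}<\frac{sL/2}{L}=\frac s2<s.
\]
So the source fails the approximate-eligibility test on every level; this covers $a=0$ too, since the target's bid is positive. The target is then the unique approximately eligible bidder, and \Cref{def:independent-pacing}(b) forces her to receive each good in full. The price on each level is the smaller bid $a/K$, so her wire payment is exactly $a$.

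The main obstacle is the reverse-leakage step, namely showing that approximate eligibility adds at most one extra level of payment. It is resolved by choosing $k$ through the relaxed inequality $t_k\ge sz$ rather than $t_k\ge z$. The factor $s$ is then absorbed by the single jump $Q=4/s$, which restores the factor-$4$ geometric decay used in the main-text proof.
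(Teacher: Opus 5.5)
Your proof is correct and is essentially the paper's argument. It uses the same price formula $\min\{a/K,\,b/(Kt_\ell)\}$, the same forward bound via $t_\ell\le sL/2$, and the same use of the spacing $Q=4/s$ to absorb the eligibility factor $s$. The only difference is bookkeeping in the reverse bound: you start from the first approximately eligible level $k$ and use $t_{k+1}\ge Qsz=4z$ to get geometric decay immediately, whereas the paper splits into the highest-bidder suffix plus at most one level in the window $sz\le t_\ell<z$. Your version therefore gives the slightly sharper total $4a/(3K)$ instead of $7a/(3K)$.
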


\begin{proof}
Because a wire good has exactly two positive valuations, its second price is the smaller of the two scaled bids. The target therefore pays at most the source bid $a/K$ on each level, proving \Cref{eq:independent-pacing-target-payment-upper}.

For the reverse direction, the claim is immediate when $a=0$, so assume $a>0$ and set $z:=b/a$. First consider levels on which the source is a highest bidder. These satisfy $t_\ell\ge z$. If there is no such level, their contribution is zero. Otherwise, let $k$ be the smallest index with $t_k\ge z$. Even if the source receives every level $\ell\ge k$, her total payment from these levels is at most
\[
 \sum_{\ell=k}^K\frac{b}{Kt_\ell}
 \le\frac aK\sum_{r=0}^{\infty}Q^{-r}
 =\frac aK\frac{Q}{Q-1}
 \le\frac{4a}{3K}.
\]

It remains to consider levels on which the target is strictly highest but the source is still approximately eligible. These two conditions are equivalent to
\begin{equation}
\label{eq:independent-pacing-eligibility-window}
 sz\le t_\ell<z.
\end{equation}
There is at most one such level: two thresholds in this interval would have ratio less than $1/s$, whereas consecutive thresholds have ratio $Q=4/s$. On this exceptional level, the second price is the source bid $a/K$. Hence the source's total payment is less than
\[
 \frac{4a}{3K}+\frac aK<\frac3K,
\]
which proves \Cref{eq:independent-pacing-source-leakage}.

Finally, suppose $b>L$. Since $t_\ell\le T=sL/2$ and $a\le1$,
\[
 \frac{b}{Kt_\ell}>\frac{2}{sK}\ge\frac{2a}{sK}.
\]
Thus the target is strictly highest and the source bid is less than an $s$ fraction of the target bid. The target is therefore the unique approximately eligible bidder, receives each wire good in full, and pays $a/K$ at every level. Summing over all $K$ levels gives total payment $a$.
\end{proof}

\begin{corollary}[Robust Backward Leakage]
\label{cor:independent-pacing-backward-leakage}
Every variable bidder pays less than $\eta$ across all wires leaving her circuit node.
\end{corollary}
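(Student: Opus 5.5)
The plan mirrors \Cref{cor:pacing-backward-leakage} from the main text, with \Cref{lem:independent-pacing-geometric-wire} taking the place of \Cref{lem:pacing-geometric-wire}. Fix a variable bidder $b_u$. The goods on which she pays for edges leaving her node are exactly the goods of the wires $W_{uv}$ with $u\to v$ an interaction edge. On each such wire, $b_u$ plays the source role. Construction~\eqref{eq:independent-pacing-wire-valuations} gives her value $1/K$ on every level, and the outgoing wires are pairwise disjoint sets of goods. So her total payment on outgoing wires is the sum, over outgoing edges, of the quantity bounded in \eqref{eq:independent-pacing-source-leakage}.

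First, I apply \Cref{lem:independent-pacing-geometric-wire} to each outgoing wire. Its hypotheses are only conditions (a)--(b) of \Cref{def:independent-pacing}, which any $(\sigma,\gamma,\tau)$-approximate equilibrium satisfies. This gives payment strictly less than $3/K$ per outgoing wire.

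Second, I invoke \Cref{thm:pure-circuit}. Every node has out-degree at most two, so $b_u$ has at most two outgoing wires. If she has none, her payment on outgoing wires is $0<\eta$. Otherwise, summing at most two strict bounds gives a total strictly less than $2\cdot 3/K=6/K=\eta$, using the definition of $\eta$ in \eqref{eq:independent-pacing-wire-parameters}.

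There is no real obstacle here; the only point needing care is the bookkeeping of which role $b_u$ plays on which wire. On incoming wires $b_u$ is the target, and those payments are not part of this corollary. On an outgoing wire she is the source, so the relevant bound is \eqref{eq:independent-pacing-source-leakage} and not the target bound. Interaction edges have distinct endpoints, so a wire never has $b_u$ in both roles, and the two bounds never need to be combined on the same good.
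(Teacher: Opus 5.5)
Your proposal is correct and matches the paper's proof: you bound each outgoing wire by \eqref{eq:independent-pacing-source-leakage}, use the out-degree bound of two from \Cref{thm:pure-circuit}, and conclude a total below $6/K=\eta$. The extra bookkeeping about source versus target roles and the disjointness of wires is accurate but not needed beyond what the paper states.
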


\begin{proof}
Every source node has out-degree at most two. By \Cref{eq:independent-pacing-source-leakage}, the total payment on its outgoing wires is therefore less than $6/K=\eta$.
\end{proof}

\paragraph{Budget forcing and ambiguity.}
For node $v$, define the aggregate incoming signal
\[
 A_v:=\sum_{u\in N^-(v)}\alpha_{b_u}.
\]
With the additional leakage controlled, the robust wire yields the same two-sided budget response as in the main pacing proof, now with enough slack to absorb the relaxed allocation rule.

\begin{lemma}[Robust Pacing Budget Forcing]
\label{lem:independent-pacing-budget-forcing}
Every $(\sigma,\gamma,\tau)$-approximate pacing equilibrium satisfies
\begin{align}
 A_v+\eta\le\frac\beta2B_v
 &\implies \alpha_{b_v}\ge h>U,
 \label{eq:independent-pacing-force-high}\\
 A_v>B_v
 &\implies \alpha_{b_v}\le L.
 \label{eq:independent-pacing-force-low}
\end{align}
\end{lemma}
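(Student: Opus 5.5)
The plan is to follow the proof of \Cref{lem:pacing-budget-forcing} line by line. The two new features are the relaxed complementarity conclusion $\alpha_{b_v}\ge h$ and the approximate winner-selection rule. The second is already absorbed into \Cref{lem:independent-pacing-geometric-wire}, so no new wire analysis should be needed. The key structural fact I rely on is that bidder $b_v$ values only goods on wires incident to her circuit node. Her spending therefore splits exactly into payments on incoming wires $W_{uv}$ with $u\in N^-(v)$, and payments on outgoing wires $W_{vw}$.

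For \Cref{eq:independent-pacing-force-high}, I bound the two parts of the spending separately. On each incoming wire $u\to v$, \Cref{eq:independent-pacing-target-payment-upper} bounds the payment of $b_v$ by $\alpha_{b_u}$, so the incoming payments total at most $A_v$. On the outgoing wires, \Cref{cor:independent-pacing-backward-leakage} gives total payment strictly less than $\eta$. Hence
\[
 \spend_{b_v}(\alpha,x)<A_v+\eta\le\tfrac\beta2B_v<\beta B_v=(1-\gamma)B_v .
\]
Condition~(d) of \Cref{def:independent-pacing} then yields $\alpha_{b_v}\ge1-\tau=h$. Finally, $h>h/2=U$ because $h>0$. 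The extra factor $\tfrac12$ in the hypothesis is not needed for this step; it is simply slack reserved for the later gate verification.

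For \Cref{eq:independent-pacing-force-low}, I argue by contradiction. Suppose $A_v>B_v$ but $\alpha_{b_v}>L$. The ``moreover'' clause of \Cref{lem:independent-pacing-geometric-wire} applies to every incoming wire $u\to v$ with $b=\alpha_{b_v}>L$. It says that $b_v$ is the unique approximately eligible bidder on each such good, receives it in full, and pays exactly $\alpha_{b_u}$ on that wire. Summing over $u\in N^-(v)$ shows that the incoming payments alone equal $A_v>B_v$. All other payments are nonnegative, so this violates budget feasibility~(c).

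I do not expect a real obstacle, since the delicate work sits in \Cref{lem:independent-pacing-geometric-wire}. The point that needs care is the second implication. Under the relaxed rule~(a), a good could in principle be split with another approximately eligible bidder, and then the payment would no longer be exact. The uniqueness part of \Cref{lem:independent-pacing-geometric-wire}, together with full allocation~(b) and $h_j(\alpha)>0$ (which holds because $b>0$), is exactly what excludes this. I will cite it explicitly rather than reuse the exact-winner argument from the main text.
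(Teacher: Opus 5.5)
Your proposal is correct and follows the paper's proof essentially step for step. For the first implication, the incoming payments are at most $A_v$ by the target-payment bound and the outgoing payments are less than $\eta$ by the backward-leakage corollary, so condition~(d) gives $\alpha_{b_v}\ge h>U$; for the second, the exact-forward clause of the robust wire lemma makes the incoming spending equal to $A_v>B_v$, which contradicts budget feasibility.
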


\begin{proof}
By \Cref{eq:independent-pacing-target-payment-upper}, the payments on incoming wires sum to at most $A_v$, and by \Cref{cor:independent-pacing-backward-leakage}, outgoing payments sum to less than $\eta$. Under the first hypothesis, bidder $b_v$ therefore spends less than $\beta B_v$, so \Cref{def:independent-pacing}(d) gives $\alpha_{b_v}\ge h>U$.

For the second implication, suppose $A_v>B_v$ and $\alpha_{b_v}>L$. The exact-forward part of \Cref{lem:independent-pacing-geometric-wire} makes the incoming-wire payment equal to $A_v>B_v$, contradicting budget feasibility. Hence $\alpha_{b_v}\le L$.
\end{proof}

\begin{lemma}[Robust Pacing Ambiguity Interval]
\label{lem:independent-pacing-ambiguity-interval}
Consider a unary-gate output $v$ with budget $B$ and predecessor $u$. If $\chi(v)=\bot$, then
\[
 \alpha_{b_u}\in J(B):=\left(\frac\beta2B-\eta,B\right].
\]
\end{lemma}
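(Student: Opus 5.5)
The argument is the contrapositive of \Cref{lem:independent-pacing-budget-forcing}, following the proof of \Cref{lem:pacing-ambiguity-interval}. Write $a:=\alpha_{b_u}$. Since $v$ is a unary-gate output with predecessor $u$, we have $N^-(v)=\{u\}$, so the aggregate incoming signal is simply $A_v=a$. The budget $B_v$ is the assigned budget $B$. The hypothesis $\chi(v)=\bot$ means, by \Cref{eq:independent-pacing-decoder}, that $L<\alpha_{b_v}<U$.

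\emph{Lower endpoint.} Suppose, for contradiction, that $a\le\frac\beta2B-\eta$, that is, $A_v+\eta\le\frac\beta2B_v$. Then \Cref{eq:independent-pacing-force-high} gives $\alpha_{b_v}\ge h>U$. Hence $\chi(v)=1$, contradicting $\chi(v)=\bot$. We conclude $a>\frac\beta2B-\eta$.

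\emph{Upper endpoint.} Suppose, for contradiction, that $a>B$, so $A_v>B_v$. Then \Cref{eq:independent-pacing-force-low} gives $\alpha_{b_v}\le L$. Hence $\chi(v)=0$, again a contradiction. We conclude $a\le B$.

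Together these give $a\in(\frac\beta2B-\eta,B]=J(B)$. There is no real obstacle; the only point to check is that the strict inequality $h>U$ (which holds because $U=h/2$ and $h>0$) places the forced-high state in the decoded-$1$ region rather than in $\bot$. All the substantive work, including absorbing the approximate-eligibility leakage into $\eta$, is already contained in \Cref{lem:independent-pacing-budget-forcing}.
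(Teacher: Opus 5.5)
Your proposal is correct and matches the paper's proof essentially step for step: both set $a=\alpha_{b_u}$ with $A_v=a$ and obtain the two endpoints as contrapositives of the two implications of \Cref{lem:independent-pacing-budget-forcing}. Your explicit checks that $N^-(v)=\{u\}$ for a unary output and that $h>U$ puts the forced-high state in the decoded-$1$ region are details the paper leaves implicit.
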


\begin{proof}
Write $a:=\alpha_{b_u}$. If $a+\eta\le(\beta/2)B$, \Cref{lem:independent-pacing-budget-forcing} gives $\alpha_{b_v}\ge h>U$, contradicting $\chi(v)=\bot$. Hence $a>(\beta/2)B-\eta$. If $a>B$, the same lemma gives $\alpha_{b_v}\le L$, again contradicting $\chi(v)=\bot$. Therefore $a\in((\beta/2)B-\eta,B]$.
\end{proof}

\paragraph{Gate verification and decoding.}
We instantiate the abstract framework with
\[
 z_u:=\alpha_{b_u},
 \qquad
 s_u:=\alpha_{b_u},
 \qquad
 a_0:=L,
 \qquad
 a_1:=U=\frac h2.
\]
By \Cref{lem:independent-pacing-ambiguity-interval}, we may use
\[
 \tau_1(B):=\frac\beta2B-\eta,
 \qquad
 \tau_0(B):=B,
 \qquad
 J(B):=\left(\frac\beta2B-\eta,B\right].
\]
The numerical slack needed for the chosen budgets follows from
\begin{align}
 L+\eta
 &<\frac{11\beta^2h}{4096}
 <\frac{\beta^2h}{64}
 =\frac\beta2B_1
 <\frac{\beta h}{8}
 =\frac\beta2B_2,
 \label{eq:independent-pacing-one-low-bound}\\
 2L+\eta
 &<\frac{19\beta^2h}{4096}
 <\frac{\beta h}{8}
 =\frac\beta2B_G.
 \label{eq:independent-pacing-two-low-bound}
\end{align}
Thus $2a_0<\tau_1(B_G)$ and $\tau_0(B_G)=h/4<a_1$, so $B_G$ satisfies the \NOT/\NOR conditions of \Cref{lem:abstract-gate-simulation}. For each $i\in\{1,2\}$, the first displayed chain gives $a_0<\tau_1(B_i)$, while $\tau_0(B_i)=B_i<a_1$. Finally,
\[
 \frac{\beta h}{32}=B_1
 <\frac{\beta h}{8}-\eta,
\]
because $3\beta h/32-\eta>0$. Hence $J(B_1)\cap J(B_2)=\varnothing$, and every hypothesis of the abstract gate lemma is satisfied.

\begin{proof}[Proof of \Cref{thm:independent-pacing}]
Take the market constructed from a bounded-degree \PureCircuit instance and let $(\alpha,x)$ be any $(\sigma,\gamma,\tau)$-approximate pacing equilibrium. Decode each node by \Cref{eq:independent-pacing-decoder}. A decoded zero has source signal at most $L=a_0$, while a decoded one has source signal at least $U=a_1$. \Cref{lem:independent-pacing-budget-forcing,lem:independent-pacing-ambiguity-interval} provide the abstract thresholds and intervals above, and the gate verification shows that the assigned budgets satisfy \Cref{lem:abstract-gate-simulation}. Applying that lemma to every gate gives a valid \PureCircuit solution.

It remains to check the reduction size and support. Each wire good has exactly two positive valuations, so $\goodsupport=2$. Since every circuit node has total interaction degree at most three, each variable bidder values at most $3K$ goods. Therefore
\[
 \biddersupport\le3K
 =O\left(((1-\gamma)^2(1-\tau))^{-1}\right).
\]
For an $N$-node circuit, the market has $O(NK)$ goods. For fixed $\sigma,\gamma,\tau$, the value of $K$ and all numerical parameters are constant, so the reduction is polynomial time. Together with \Cref{thm:pure-circuit}, this proves the theorem.
\end{proof}

As a consistency check, coupling the independent margins recovers the pacing approximation studied by Chen, Kroer, and Kumar.

\begin{corollary}[Coupled Pacing Approximation]
\label{cor:independent-pacing-coupled}
For every fixed $\delta,\rho\in[0,1)$, computing a $(\delta,\rho)$-approximate pacing equilibrium in the sense of Chen, Kroer, and Kumar is $\PPAD$-hard.
\end{corollary}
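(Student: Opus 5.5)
The plan is to reduce the coupled notion directly to \Cref{thm:independent-pacing} by showing that a $(\delta,\rho)$-approximate pacing equilibrium in the sense of Chen, Kroer, and Kumar is, verbatim, a $(\sigma,\gamma,\tau)$-approximate pacing equilibrium with $(\sigma,\gamma,\tau)=(\delta,\rho,\rho)$. That correspondence is stated just after \Cref{def:independent-pacing}. I will first recall the CKK definition~\cite{CKK24} and check the conditions one at a time. Approximate winner selection $x_{ij}>0\Rightarrow\alpha_iv_{ij}\ge(1-\delta)h_j(\alpha)$ is condition~(a) with $\sigma=\delta$. Full allocation of goods with a positive highest bid is~(b). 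Exact budget feasibility is~(c). The relaxed complementarity ``$\spend_i<(1-\rho)B_i$ implies $\alpha_i\ge1-\rho$'' is~(d) with $\gamma=\tau=\rho$.

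With the identification in hand, fix $\delta,\rho\in[0,1)$ and apply \Cref{thm:independent-pacing} with $\sigma:=\delta$ and $\gamma=\tau:=\rho$, all of which lie in $[0,1)$. The resulting market is built in polynomial time from a bounded-degree \PureCircuit instance. Every pair $(\alpha,x)$ that is a CKK $(\delta,\rho)$-equilibrium of that market is a $(\delta,\rho,\rho)$-approximate equilibrium in the sense of \Cref{def:independent-pacing}. The decoding \Cref{eq:independent-pacing-decoder} then produces a valid \PureCircuit solution, which gives $\PPAD$-hardness. The support bounds also carry over, with $\biddersupport=O(((1-\rho)^3)^{-1})$ and $\goodsupport=2$.

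The only real obstacle is definitional alignment: making sure the CKK conventions match ours exactly. I will check two points. First, prices must be the profile-wide second-highest scaled bids even for approximately eligible winners. Second, if CKK's formulation differs, for instance by being stated as a relaxation of budget feasibility or with prices defined relative to winners, I must confirm that their solution set is contained in ours. If CKK allowed slight budget overspending, the inclusion would fail. In that case I would fall back to reproving \Cref{lem:independent-pacing-budget-forcing} with $B_v$ replaced by $(1+\rho)B_v$ in the low-forcing branch and adjusting the budgets in \Cref{eq:independent-pacing-gate-budgets} by constant factors. I expect this fallback to be unnecessary, since the text asserts the correspondence with $(\sigma,\gamma,\tau)=(\delta,\rho,\rho)$.
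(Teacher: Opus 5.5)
Your proposal is correct and matches the paper's proof: the paper also obtains the corollary by instantiating \Cref{thm:independent-pacing} at $(\sigma,\gamma,\tau)=(\delta,\rho,\rho)$, relying on the correspondence with the Chen--Kroer--Kumar definition stated after \Cref{def:independent-pacing}. Your condition-by-condition check and the support bound $\biddersupport=O((1-\rho)^{-3})$ are consistent with that, and the overspending fallback is not needed.
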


\begin{proof}
Set $(\sigma,\gamma,\tau)=(\delta,\rho,\rho)$ in \Cref{thm:independent-pacing}.
\end{proof}

\paragraph{Inverse-Polynomial Endpoint Gaps.}
The explicit parameter dependence further shows that the construction remains polynomial when all three distances to their respective endpoints are inverse-polynomial. For fixed nonnegative integers $c_\sigma,c_\gamma,c_\tau$ and an $N$-node source instance, set
\[
 \sigma_N=1-N^{-c_\sigma},\qquad
 \gamma_N=1-N^{-c_\gamma},\qquad
 \tau_N=1-N^{-c_\tau}.
\]
Then $K=O(N^{2c_\gamma+c_\tau})$ and $Q=4N^{c_\sigma}$. The market has $O(NK)$ goods, and the powers defining the wire thresholds have polynomial bit length, so the construction remains polynomial time and the same decoding proof applies. If $M_N$ denotes the encoding length of the produced instance, then $N\le M_N\le N^D$ for a constant $D$ depending only on the three exponents. Hence every nonconstant endpoint gap remains inverse-polynomial in $M_N$.

\paragraph{Endpoint Cases.}
If either $\gamma=1$ or $\tau=1$, the all-zero multiplier vector together with the zero allocation satisfies the natural endpoint extension of \Cref{def:independent-pacing}. When $\gamma,\tau<1$, the boundary $\sigma=1$ is different: winner eligibility becomes vacuous while complementarity remains active, and the robust wire requires $s=1-\sigma>0$. We therefore make no hardness claim for this remaining boundary case.

\section{Independent Relaxations for Throttling}
\label{app:independent-throttling}

The main throttling theorem uses one parameter both for the spending threshold and for the participation level required by complementarity. This appendix separates those two margins. After a simple rescaling, the same blocker amplifier continues to provide a low--high signal gap, and hardness persists for every fixed positive $\rho,\tau<1$ with constant positive-support degrees.

\begin{definition}[Independent Throttling Approximation]
\label{def:independent-throttling}
For $\rho,\tau\in[0,1]$, a vector $\theta\in[0,1]^n$ is a \emph{$(\rho,\tau)$-approximate throttling equilibrium} if, for every bidder $i$,
\begin{align}
 S_i(\theta)&\le B_i,
 \label{eq:independent-throttling-budget-feasibility}\\
 S_i(\theta)<(1-\rho)B_i
 &\implies \theta_i\ge1-\tau.
 \label{eq:independent-throttling-complementarity}
\end{align}
\end{definition}

Here $\rho$ controls the spending trigger and $\tau$ controls the required participation level. Setting $\rho=\tau=\delta$ recovers \Cref{def:approx-throttling}.

\begin{theorem}[Independent-Relaxation Throttling $\PPAD$-Hardness]
\label{thm:independent-throttling}
For every fixed $\rho,\tau\in(0,1)$, computing a $(\rho,\tau)$-approximate throttling equilibrium is $\PPAD$-hard. Hardness holds with
\[
 \goodsupport(\mathcal I),\ \biddersupport(\mathcal I)=O_{\rho,\tau}(1).
\]
\end{theorem}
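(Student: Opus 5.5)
We reuse the throttling construction of \Cref{sec:throttling-hardness} almost verbatim, but with the two roles of $H$ separated. Write $\beta:=1-\rho$ for the spending-trigger slack and $h:=1-\tau$ for the required participation level. Decode with $\chi(v):=\dec_{L,h}(\theta_v)$ for a new low scale $L$. Set $H_0:=\min\{\beta,h\}$; since only finitely many constants are involved, the simplest approach is to run the whole construction with $H_0$ in the role of $H$. The point is that \eqref{eq:independent-throttling-complementarity} with $(\rho,\tau)$ has the following consequence: if $S_i<H_0B_i\le\beta B_i$, then $\theta_i\ge h\ge H_0$. Hence every use of complementarity in the main proof, which always has the form ``spending below $H\cdot$budget forces participation $\ge H$'', goes through with $H_0$. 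Budget feasibility is unchanged.

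The key steps mirror the main proof.
\begin{enumerate}[label=(\arabic*)]
\item Blocker response (\Cref{lem:throttling-blocker-response}). With $L:=H_0^3/8$, $\lambda:=H_0/4$, $p:=L/12$ and $C:=H_0^2/4$, we get $S_z\le 3L/2<H_0C$ in the low case, forcing $\theta_z\ge h\ge H_0$. In the high case we need the source to satisfy $\theta_u\ge h$. Here I would decode ones at threshold $h$, so that $\theta_u\ge h\ge H_0$ and the bound $\theta_z\theta_u>\lambda H_0=C$ still works.
\item Signal gap. Take $a_0=pL(1-H_0)^K$ and $a_1=pH_0(1-H_0/4)^K$, with $K$ chosen by \eqref{eq:throttling-amplifier-size} with $H_0$; this gives the master inequality $4a_0<H_0L^2a_1$.
\item Budget forcing. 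The exact spending identity $S_v=\theta_vA_v$ is model-independent. It gives: $A_v<H_0B$ forces $\theta_v\ge h$, i.e.\ $\chi(v)=1$; and $A_v>B/L$ forces $\theta_v\le L$.
\item Ambiguity. If $L<\theta_v<h$, the contrapositive of complementarity gives $\theta_vy\ge\beta B$, so $y>\beta B\ge H_0B$. Feasibility gives $y<B/L$. Thus $J(B)=(H_0B,B/L)$.
\end{enumerate}

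For the gate inequalities we take $B_1=L^2a_1/2$ and $B_2=3La_1/4$ as before. Then $B_1/L<H_0B_2$ holds because $La_1/2<3H_0La_1/4$ requires $H_0>2/3$, which may fail. I would instead rescale $B_2:=La_1/H_0$ (or more generally $B_2$ with $B_1/L<H_0B_2$ and $B_2/L<a_1$). This needs $La_1/2<H_0B_2$ and $B_2<La_1$. Choosing $B_2:=\tfrac{3}{4}La_1$ with the disjointness condition $La_1/2<\tfrac34H_0La_1$ requires $H_0>2/3$, so I would shrink $B_1$ to $B_1:=H_0L^2a_1/4$. Then $B_1/L=H_0La_1/4<\tfrac34H_0La_1=H_0B_2$, while $2a_0<H_0B_1=H_0^2L^2a_1/4$ requires strengthening the master inequality to $8a_0<H_0^2L^2a_1$. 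This follows by taking $K$ slightly larger, i.e.\ replacing $H^2L$ by $H_0^3L/2$ in \eqref{eq:throttling-amplifier-size}. The NOT/NOR conditions $a_0<H_0B_i$ and $B_i/L<a_1$ follow similarly. Then \Cref{lem:abstract-gate-simulation} applies.

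Support is $K+2$ per good and $K+3$ per bidder, with $K=O(H_0^{-1}\log(1/H_0))=O_{\rho,\tau}(1)$, so the reduction is polynomial. The main obstacle is the bookkeeping in the ambiguity interval and budget disjointness. In the coupled case the trigger and the participation level coincide, which gives the clean interval $(B,B/L)$. With independent parameters the lower end becomes $\beta B$ rather than $B$, so the budget ratio must be widened by $H_0$ and $K$ enlarged to compensate. I expect the delicate point to be checking that the high-source blocker bound still uses only feasibility, so that $\tau$ never enters there.
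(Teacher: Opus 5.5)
Your proof is correct, but it takes a different route from the paper's. You collapse the two margins into the worse one, $H_0=\min\{1-\rho,1-\tau\}$, and rerun the coupled construction.

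Your key observation is that spending below $H_0B$ forces participation $\ge h\ge H_0$. This says exactly that every $(\rho,\tau)$-approximate equilibrium is a $\max\{\rho,\tau\}$-approximate equilibrium in the sense of \Cref{def:approx-throttling}. Pushed one step further, the theorem becomes an immediate corollary of \Cref{thm:throttling-hardness} with $\delta=\max\{\rho,\tau\}$. Decode ones at $H_0$ instead of $h$. Then $\chi(v)=\bot$ gives $\theta_v<H_0\le h$, so $\theta_v y\ge\beta B\ge H_0B$ and hence $y\ge H_0B/\theta_v>B$. The ambiguity interval is again $(B,B/L)$, and the original budgets $L^2a_1/2$ and $3La_1/4$ work unchanged.

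Your re-tuning is needed only because you decoded at $h$ and then weakened $y\ge\beta B/\theta_v$ to $y>\beta B$. This is the change of $B_1$ to $H_0L^2a_1/4$ together with the strengthened amplifier condition $8a_0<H_0^2L^2a_1$. The fix is nonetheless valid. The intermediate suggestion $B_2:=La_1/H_0$ violates $B_2<La_1$ and should be struck.

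The paper instead keeps the two margins separate:
\begin{itemize}
\item It decodes at $H=1-\tau$, with $L=\beta^2H^2/256$ and $C=H^2/8$.
\item It uses the sharper interval $J(B)=(\beta B/H,B/L)$ and budgets $B_1=\beta L^2a_1/(4H)$, $B_2=La_1/2$.
\item It chooses $K$ from $8r^K<\beta^2L$, where $r=(1-H)/(1-H/4)$ depends only on $\tau$.
\end{itemize}

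What this buys is quantitative. The paper's amplifier has $K=O(H^{-1}(1+\log(1/\beta)+\log(1/H)))$, so the support grows only logarithmically as $\rho\to1$. Yours has $K=O(H_0^{-1}(1+\log(1/H_0)))$, which grows like $(1-\rho)^{-1}\log(1/(1-\rho))$ when $1-\rho\ll1-\tau$.

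For fixed $\rho,\tau$ both give $O_{\rho,\tau}(1)$ support, so your route fully proves the stated theorem. The difference matters only for the inverse-polynomial endpoint-gap refinement. There it changes $K_N$ from about $N^{c_\tau}\log N$ to about $N^{\max\{c_\rho,c_\tau\}}\log N$, which is still polynomial.
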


Fix $\rho,\tau\in(0,1)$ and set
\[
 \beta:=1-\rho,
 \qquad
 H:=1-\tau,
\]
so $\beta,H\in(0,1)$. Here $\beta$ is the spending-trigger margin and $H$ is the required participation level. We again reduce from bounded-degree \PureCircuit.

\subsection{Construction}
\label{subsec:independent-throttling-construction}

We retain the topology of the main throttling reduction and rescale the decoding threshold, blocker budget, and gate budgets to reflect the two independent margins. Set
\[
 L:=\frac{\beta^2H^2}{256},
 \qquad
 \lambda:=\frac H4,
 \qquad
 p:=\frac L{12},
 \qquad
 C:=\frac{H^2}{8},
\]
and decode node $v$ by
\begin{equation}
\label{eq:independent-throttling-decoder}
 \chi(v):=\dec_{L,H}(\theta_v).
\end{equation}
Because $L/H=\beta^2H/256<1$, the low and high decoding regions are disjoint.

Choose the least positive integer $K$ satisfying
\begin{equation}
\label{eq:independent-throttling-amplifier-size}
 8\left(\frac{1-H}{1-H/4}\right)^K<\beta^2L.
\end{equation}
The ratio lies in $[0,1)$, so such a $K$ exists; when $H=1$, already $K=1$ suffices.

For every circuit node $v$, create one variable bidder $v$. For every source node $u$ of positive out-degree, create blockers $z_{u,1},\ldots,z_{u,K}$, each with budget $C$, and reuse this bundle on every edge leaving $u$. For each blocker $z_{u,q}$, create a control good $c_{u,q}$ on which $z_{u,q}$ bids $2$ and $u$ bids $1$. For every interaction edge $u\to v$, create a signal good $s_{uv}$ on which every blocker $z_{u,q}$ bids $3p$, the target $v$ bids $2p$, and the source $u$ bids $p$. All other bids are zero.

For every source node $u$ of positive out-degree, define
\begin{equation}
\label{eq:independent-throttling-signal-coefficient}
 y_u:=p\theta_u\prod_{q=1}^K(1-\theta_{z_{u,q}}),
\end{equation}
and the two signal scales
\begin{equation}
\label{eq:independent-throttling-signal-scales}
 a_0:=pL(1-H)^K,
 \qquad
 a_1:=pH(1-H/4)^K.
\end{equation}
Finally, set
\begin{equation}
\label{eq:independent-throttling-gate-budgets}
 B_1:=\frac{\beta L^2a_1}{4H},
 \qquad
 B_2:=\frac{La_1}{2}.
\end{equation}
Assign budget $B_1$ to every \NOT and \NOR output. For an \NPURIFY gate, assign its two outputs budgets $B_1$ and $B_2$, respectively. Every blocker keeps budget $C$. This completes the market construction.

\subsection{Proof of Correctness}
\label{subsec:independent-throttling-correctness}

The correctness proof follows the same sequence as before. The rescaled control goods first force the desired blocker response and therefore a separated source signal. Exact variable spending then yields the threshold test and ambiguity interval, and the final parameter check lets us invoke \Cref{lem:abstract-gate-simulation}.

\paragraph{Signal behavior.}
The next lemma packages the two steps that matter for the independent parameters: the control good forces the blocker response, and the product of blocker absences turns that response into a signal gap.

\begin{lemma}[Robust Blocker Amplification]
\label{lem:independent-throttling-amplifier}
Every $(\rho,\tau)$-approximate throttling equilibrium satisfies, for every source node $u$ of positive out-degree and every blocker $z_{u,q}$,
\[
 \theta_u\le L\implies\theta_{z_{u,q}}\ge H,
 \qquad
 \theta_u\ge H\implies\theta_{z_{u,q}}\le\lambda.
\]
Consequently,
\[
 \theta_u\le L\implies y_u\le a_0,
 \qquad
 \theta_u\ge H\implies y_u\ge a_1,
\]
and
\begin{equation}
\label{eq:independent-throttling-master-inequality}
 8Ha_0<\beta^2 L^2a_1.
\end{equation}
\end{lemma}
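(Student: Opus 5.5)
We follow the proof of \Cref{lem:throttling-blocker-response,lem:throttling-signal-gap}, replacing the single parameter $H$ by the pair $(\beta,H)$ where complementarity is invoked. Fix a source $u$ of positive out-degree and a blocker $z:=z_{u,q}$. The payment structure is unchanged from the main construction. On $c_{u,q}$ the blocker pays $1$ exactly when both she and $u$ participate. On each of the at most two signal goods leaving $u$ (out-degree at most two by \Cref{thm:pure-circuit}) she pays at most $3p$ whenever she participates. Hence
\[
 \theta_z\theta_u\le S_z(\theta)\le\theta_z(\theta_u+6p).
\]

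For the low case, assume $\theta_u\le L$. Then
\[
 S_z(\theta)\le L+6p=\tfrac{3L}{2}=\tfrac{3\beta^2H^2}{512}.
\]
This is below $\beta C=\beta H^2/8$, because $3\beta/512<1/8$. Condition~\eqref{eq:independent-throttling-complementarity} is triggered at $(1-\rho)C=\beta C$, so it forces $\theta_z\ge1-\tau=H$.

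For the high case, assume $\theta_u\ge H$ and $\theta_z>\lambda$. Then the control payment alone exceeds $\lambda H=H^2/4>C=H^2/8$, which contradicts \eqref{eq:independent-throttling-budget-feasibility}. So $\theta_z\le\lambda$.

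The signal bounds follow by substituting these blocker bounds into \eqref{eq:independent-throttling-signal-coefficient}. If $\theta_u\le L$, every factor satisfies $1-\theta_{z_{u,q}}\le1-H$, and $\theta_u\le L$ gives $y_u\le pL(1-H)^K=a_0$. If $\theta_u\ge H$, every factor satisfies $1-\theta_{z_{u,q}}\ge1-H/4$, and $\theta_u\ge H$ gives $y_u\ge pH(1-H/4)^K=a_1$.

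For the master inequality, multiply \eqref{eq:independent-throttling-amplifier-size} by $pHL(1-H/4)^K>0$. The left side becomes $8pHL(1-H)^K=8Ha_0$. The right side becomes $\beta^2pHL^2(1-H/4)^K=\beta^2L^2a_1$. This is exactly \eqref{eq:independent-throttling-master-inequality}.

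The only real checkpoint is the low case. There the blocker budget $C$ must leave room for the trigger factor $\beta$, so that $\tfrac{3L}{2}<\beta C$. The constants $L=\beta^2H^2/256$ and $C=H^2/8$ are chosen for this, and they still keep $\lambda H>C$ for the high case. Everything else is a direct transcription of the main-text argument.
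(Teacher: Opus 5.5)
Your proof is correct and follows the paper's argument step for step. It uses the same sandwich bound $\theta_z\theta_u\le S_z(\theta)\le\theta_z(\theta_u+6p)$, the same check $3L/2<\beta C$ in the low case, the same substitution into $y_u$, and the same multiplication of the amplifier-size inequality by $pHL(1-H/4)^K$; the only cosmetic difference is in the high case, where you argue by contradiction via $\lambda H=H^2/4>C$, whereas the paper reads off the slightly stronger bound $\theta_{z_{u,q}}\le C/H=H/8<\lambda$ directly from budget feasibility.
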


\begin{proof}
Fix a blocker $z:=z_{u,q}$. As in the main construction, the control good contributes $\theta_z\theta_u$, while at most two outgoing signal goods contribute at most $6p\theta_z$. Hence
\[
 \theta_z\theta_u
 \le S_z(\theta)
 \le \theta_z(\theta_u+6p).
\]
If $\theta_u\le L$, then
\[
 S_z(\theta)
 \le L+6p
 =\frac{3L}{2}
 =\frac{3\beta^2H^2}{512}
 <\frac{\beta H^2}{8}
 =\beta C.
\]
Condition~\eqref{eq:independent-throttling-complementarity} therefore forces $\theta_z\ge H$. If instead $\theta_u\ge H$, budget feasibility gives
\[
 \theta_z H
 \le\theta_z\theta_u
 \le S_z(\theta)
 \le C,
\]
so
\[
 \theta_z\le\frac CH=\frac H8<\frac H4=\lambda.
\]
Substituting these blocker bounds into \Cref{eq:independent-throttling-signal-coefficient} gives the stated low and high signal bounds. Finally, multiplying \Cref{eq:independent-throttling-amplifier-size} by $p H L(1-H/4)^K$ yields
\[
 8Ha_0<\beta^2 L^2a_1.
\]
In particular, $a_0<a_1$ because $\beta^2 L^2<8H$.
\end{proof}

\paragraph{Budget forcing and ambiguity.}
For each variable bidder $v$, let
\[
 A_v:=\sum_{u\in N^-(v)}y_u.
\]
By construction, a variable bidder pays only on incoming signal goods, so her spending admits an exact identity.

\begin{corollary}[Exact Variable Spending]
\label{cor:independent-throttling-exact-variable-spending}
Every variable bidder $v$ satisfies
\begin{equation}
\label{eq:independent-throttling-variable-spending}
 S_v(\theta)=\theta_vA_v.
\end{equation}
\end{corollary}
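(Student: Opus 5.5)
The plan is to compute $S_v(\theta)$ directly from the payment formula \eqref{eq:throttling-payment}, going through the goods on which $v$ has a positive bid, as in \Cref{cor:throttling-exact-variable-spending}. The construction of \Cref{subsec:independent-throttling-construction} has the same topology and bid ordering as the main throttling reduction; only the numbers change. So a variable bidder $v$ bids on three kinds of goods: the control goods $c_{v,q}$ of her own blocker bundle, when $v$ has positive out-degree; the outgoing signal goods $s_{vw}$; and the incoming signal goods $s_{uv}$ for $u\in N^-(v)$. No other good gets a positive bid from $v$. Gates have distinct inputs and outputs, so $u\ne v$ on every edge, and the three kinds are disjoint.

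The first step is to show that $v$ never pays on the first two kinds. On $c_{v,q}$, bidder $v$ bids $1$ below the blocker's bid $2$, so $v$ wins only when the blocker is absent. She is then the only participant and pays zero. In \eqref{eq:throttling-payment} this shows up as the sum over $\ell$ with $v\succ_j\ell$ containing only zero-bid bidders. On an outgoing good $s_{vw}$, bidder $v$ has the lowest positive bid $p$, so any bidder ranked below her has bid zero and her term again vanishes. Zero-bid bidders can only occupy lower priority slots, so tie-breaking never matters.

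The second step is to evaluate the payment on an incoming good $s_{uv}$. The strict order is: all blockers $z_{u,q}$ at $3p$, then $v$ at $2p$, then $u$ at $p$, then zeros. The only bidder with a positive bid ranked below $v$ is $u$. The bidders other than $v$ ranked above $u$ are exactly the $K$ blockers. Formula \eqref{eq:throttling-payment} therefore gives
\[
q_{v,s_{uv}}(\theta)=p\,\theta_v\theta_u\prod_{q=1}^K(1-\theta_{z_{u,q}})=\theta_v y_u,
\]
using \eqref{eq:independent-throttling-signal-coefficient}. Summing over $u\in N^-(v)$ gives $S_v(\theta)=\theta_vA_v$.

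There is no real obstacle here. The only care needed is bookkeeping: confirm that the bundle used on $s_{uv}$ is $u$'s, not $v$'s, and that $v$'s own bundle appears only on her control goods and outgoing signal goods, where she pays nothing. I expect the proof to be a line: the same calculation as in \Cref{cor:throttling-exact-variable-spending} and \eqref{eq:throttling-exact-signal-payment} applies verbatim, since the bid orders are unchanged.
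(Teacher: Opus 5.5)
Your proposal is correct and follows essentially the same route as the paper: you split $v$'s goods into incoming signal goods, where she pays exactly $\theta_v y_u$, and outgoing signal and control goods, where she pays nothing, then sum over $N^-(v)$. The only difference is that you read the incoming payment off the explicit formula \eqref{eq:throttling-payment}, while the paper uses the equivalent conditional-probability description; your version is simply a more explicit rendering of the same computation.
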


\begin{proof}
On every incoming signal good $s_{uv}$, conditional on $v$ participating, the target pays $p$ exactly when the source participates and every blocker in the shared bundle is absent. Thus her expected payment on that good is $\theta_v y_u$. Variable bidders pay zero on outgoing signal and control goods. Summing over incoming edges gives \Cref{eq:independent-throttling-variable-spending}.
\end{proof}

\begin{lemma}[Robust Throttling Budget Forcing]
\label{lem:independent-throttling-budget-forcing}
For a variable bidder $v$ with budget $B$, every $(\rho,\tau)$-approximate throttling equilibrium satisfies
\[
 A_v<\beta B\implies\theta_v\ge H,
 \qquad
 A_v>B/L\implies\theta_v\le L.
\]
\end{lemma}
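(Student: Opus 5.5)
The argument mirrors \Cref{lem:throttling-budget-forcing}, with the spending trigger $(1-\rho)B=\beta B$ in place of $HB$. The participation level $1-\tau=H$ enters only through the conclusion of the complementarity condition. Both implications will follow from the exact identity $S_v(\theta)=\theta_vA_v$ of \Cref{cor:independent-throttling-exact-variable-spending}, which needs no further work here.

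For the first implication, I would use $\theta_v\le1$ and the identity to get $S_v(\theta)=\theta_vA_v\le A_v<\beta B=(1-\rho)B$. The premise of \eqref{eq:independent-throttling-complementarity} is then satisfied, so $\theta_v\ge1-\tau=H$. This handles the edge case $A_v=0$ automatically, since then $S_v=0<\beta B$ because $B>0$.

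For the second implication, I would argue by contradiction. Suppose $A_v>B/L$ and $\theta_v>L$. Then $S_v(\theta)=\theta_vA_v>LA_v>B$, which contradicts budget feasibility \eqref{eq:independent-throttling-budget-feasibility}. Hence $\theta_v\le L$. Note that $A_v>B/L>0$, so the strict inequality $\theta_vA_v>LA_v$ is legitimate.

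There is no real obstacle. The only point to check is that the exact spending identity still holds in the rescaled construction. The bids and orderings are unchanged from \Cref{sec:throttling-hardness}, so this is already settled by \Cref{cor:independent-throttling-exact-variable-spending}. I would simply cite that corollary and keep $\beta$ and $H$ in their separate roles.
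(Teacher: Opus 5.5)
Your proposal is correct and matches the paper's proof essentially line for line: it uses the exact identity $S_v(\theta)=\theta_vA_v\le A_v<\beta B$ to trigger complementarity, and for the second implication the contradiction $S_v(\theta)=\theta_vA_v>LA_v>B$ with budget feasibility. Your added remark that $A_v>B/L>0$ is what makes the strict inequality $\theta_vA_v>LA_v$ valid is a small detail the paper leaves implicit.
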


\begin{proof}
Since $S_v(\theta)=\theta_vA_v\le A_v$, the inequality $A_v<\beta B$ triggers \Cref{eq:independent-throttling-complementarity} and forces $\theta_v\ge H$. Conversely, if $A_v>B/L$ and $\theta_v>L$, then $S_v(\theta)=\theta_vA_v>L A_v>B$, contradicting budget feasibility.
\end{proof}

\begin{lemma}[Robust Throttling Ambiguity Interval]
\label{lem:independent-throttling-ambiguity-interval}
Consider a unary-gate output $v$ with budget $B$ and predecessor $u$. If $\chi(v)=\bot$, then
\[
 y_u\in J(B):=\left(\frac{\beta B}{H},\frac{B}{L}\right).
\]
\end{lemma}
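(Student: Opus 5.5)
We argue exactly as in the main-text ambiguity lemma (\Cref{lem:throttling-ambiguity-interval}), with the exact spending identity \Cref{eq:independent-throttling-variable-spending} in place of \Cref{eq:throttling-variable-spending}. Write $\theta:=\theta_v$ and $y:=y_u$. Since $v$ is a unary-gate output with unique predecessor $u$, the aggregate signal is $A_v=y$. \Cref{cor:independent-throttling-exact-variable-spending} then gives $S_v(\theta)=\theta y$. By the decoder \Cref{eq:independent-throttling-decoder}, the hypothesis $\chi(v)=\bot$ means exactly $L<\theta<H$.

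\textbf{Lower endpoint.} Since $\theta<H=1-\tau$, the conclusion of the complementarity condition \eqref{eq:independent-throttling-complementarity} fails. Its contrapositive yields $S_v(\theta)\ge(1-\rho)B=\beta B$, that is, $\theta y\ge\beta B$. Dividing by $\theta$ and using the strict bound $\theta<H$ gives
\[
 y\ge\frac{\beta B}{\theta}>\frac{\beta B}{H}.
\]
Here $\theta>L>0$ guarantees the division is legitimate, and $B>0$ makes the inequality strict.

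\textbf{Upper endpoint.} Budget feasibility \eqref{eq:independent-throttling-budget-feasibility} gives $\theta y\le B$. Hence $y\le B/\theta<B/L$, using $\theta>L$ and $B>0$.

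Combining the two bounds gives $y_u\in(\beta B/H,\,B/L)=J(B)$. The only point requiring care, and it is minor, is that in the decoupled model the trigger uses the spending margin $\beta$ while the participation threshold is $H$. The lower endpoint therefore becomes $\beta B/H$ rather than $B$, which is why the decoupled interval differs from the main-text $J(B)=(B,B/L)$. There is no real obstacle beyond checking that each inequality is strict in the right direction.
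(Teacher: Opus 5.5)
Your proof is correct and follows the paper's argument essentially line for line: $\chi(v)=\bot$ gives $L<\theta_v<H$, the contrapositive of the complementarity condition together with $S_v(\theta)=\theta_v y_u$ yields $y_u\ge\beta B/\theta_v>\beta B/H$, and budget feasibility yields $y_u\le B/\theta_v<B/L$. The strict lower inequality needs $\beta B>0$, not just $B>0$ as you wrote; this holds because $\beta=1-\rho>0$ for $\rho<1$.
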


\begin{proof}
Write $\theta:=\theta_v$ and $y:=y_u$. By \Cref{eq:independent-throttling-decoder}, $\chi(v)=\bot$ means $L<\theta<H$. Since $\theta<H$, the contrapositive of \eqref{eq:independent-throttling-complementarity} gives $\theta y\ge\beta B$, and therefore $y>\beta B/H$. Budget feasibility gives $\theta y\le B$; because $\theta>L$, we also have $y<B/L$. This proves the stated interval.
\end{proof}

\paragraph{Gate verification and decoding.}
We instantiate the abstract framework with
\[
 z_u:=\theta_u,
 \qquad
 s_u:=y_u.
\]
By \Cref{lem:independent-throttling-amplifier}, decoded zeros and ones send signals at most $a_0$ and at least $a_1$, respectively. By \Cref{lem:independent-throttling-budget-forcing,lem:independent-throttling-ambiguity-interval}, we may use
\[
 \tau_1(B):=\beta B,
 \qquad
 \tau_0(B):=\frac BL,
 \qquad
 J(B):=\left(\frac{\beta B}{H},\frac BL\right).
\]
The master inequality gives
\[
 2a_0<\frac{\beta^2 L^2a_1}{4H}=\beta B_1.
\]
Moreover,
\[
 \frac{B_1}{B_2}=\frac{\beta L}{2H}=\frac{\beta^3H}{512}<1,
\]
so $B_1<B_2$, and
\[
 \frac{B_1}{L}=\frac{\beta La_1}{4H}<a_1,
 \qquad
 \frac{B_2}{L}=\frac{a_1}{2}<a_1.
\]
Thus $B_1$ satisfies the \NOT/\NOR conditions, and both $B_1$ and $B_2$ satisfy the Boolean-input conditions for \NPURIFY. Finally,
\[
 \frac{B_1}{L}=\frac{\beta La_1}{4H}
 <\frac{\beta La_1}{2H}=\frac{\beta B_2}{H},
\]
so $J(B_1)\cap J(B_2)=\varnothing$. Every hypothesis of \Cref{lem:abstract-gate-simulation} is therefore satisfied.

\begin{proof}[Proof of \Cref{thm:independent-throttling}]
Take the market constructed from a bounded-degree \PureCircuit instance and let $\theta$ be any $(\rho,\tau)$-approximate throttling equilibrium. Decode each node by \Cref{eq:independent-throttling-decoder}. \Cref{lem:independent-throttling-amplifier} supplies the abstract signal gap, while \Cref{lem:independent-throttling-budget-forcing,lem:independent-throttling-ambiguity-interval} supply the required thresholds and intervals. The gate verification above therefore lets us apply \Cref{lem:abstract-gate-simulation} to every gate. For \NPURIFY, the two outputs receive the same predecessor signal because the construction reuses the source's blocker bundle. Hence the decoded assignment is a valid \PureCircuit solution.

For an $N$-node source circuit, the market has $O(NK)$ bidders and goods. Every control good has two positive bids, every signal good has $K+2$, every variable bidder bids positively on at most $K+3$ goods, and every blocker on at most three. Hence
\[
 \goodsupport\le K+2,
 \qquad
 \biddersupport\le K+3.
\]
For fixed $\rho,\tau$, the value of $K$ and all numerical parameters are constant, so the reduction is polynomial time. Together with \Cref{thm:pure-circuit}, this proves the theorem.
\end{proof}

\paragraph{Inverse-Polynomial Endpoint Gaps.}
The same explicit estimates keep the reduction polynomial when the two endpoint gaps are inverse-polynomial. For fixed positive integers $c_\rho,c_\tau$ and an $N$-node source instance, set
\[
 \rho_N=1-N^{-c_\rho},
 \qquad
 \tau_N=1-N^{-c_\tau}.
\]
Then $\beta=N^{-c_\rho}$ and $H=N^{-c_\tau}$. Since
\[
 \frac{1-H}{1-H/4}\le e^{-3H/4},
\]
it is enough to take
\[
 K>\frac{4}{3H}\left(\log 2048+4\log\frac1\beta+2\log\frac1H\right).
\]
For $N\ge2$, the explicit choice
\[
 K_N:=\left\lceil
 8N^{c_\tau}\left(2+(4c_\rho+2c_\tau)\lceil\log_2N\rceil\right)
 \right\rceil
\]
dominates this sufficient bound and therefore satisfies \Cref{eq:independent-throttling-amplifier-size}. Thus $K_N=O(N^{c_\tau}(1+(c_\rho+c_\tau)\log N))$. All primitive bids and budgets have polynomial bit length, and the powers appearing in $a_1$ and the gate budgets have $O(K_N\log N)$ bits. Hence the produced market has polynomial binary encoding length, every nonconstant endpoint gap remains inverse-polynomial in that length, and the same decoding proof gives $\PPAD$-hardness throughout this regime.

\paragraph{Endpoint Cases.}
If either $\rho=1$ or $\tau=1$, the all-zero participation vector satisfies \Cref{def:independent-throttling}.

\end{document}